\newtheorem{corrolary}{Corrolary}
\newtheorem{problem}{Problem}
\newcommand{\Id}{{\rm Id}}
\renewcommand{\Lambda}{{\mathcal V}}
\renewcommand{\Gamma}{{\mathcal Y}}
\newcommand{\toll}{{toll}}
\renewcommand{\supp}{{\rm Supp\,}}
\newcommand{\blue}{\color{black}}
\newcommand{\black}{\color{black}}
\definecolor{gray}{rgb}{0.5,0.5,0.5}
\newcommand{\f}{{\rm v}}
\DeclareMathOperator*{\argmin}{arg\,min}
\title{Optimal Transport through a Toll Station}
\author{Arthur Stephanovitch\thanks{\'Ecole normale sup\'erieure Paris-Saclay, France. ({\tt  arthur.stephanovitch@ens-paris-saclay.fr}).} \and Anqi Dong\thanks{University of California, Irvine, CA 92697, USA. ({\tt  anqid2@uci.edu})} \and Tryphon T.\ Georgiou\thanks{University of California, Irvine, CA 92697, USA. ({\tt  tryphon@uci.edu})}}
\date{May 2022}
\def\spacingset#1{\def\baselinestretch{#1}\small\normalsize}
\begin{document}

\maketitle

\begin{abstract}
    We address the problem of optimal transport with a quadratic cost functional and a constraint on the flux through a constriction along the path. The constriction, conceptually represented by a toll station, limits the flow rate across. We provide a precise formulation which, in addition, is amenable to generalization in higher dimensions. We work out in detail the case of transport in one dimension by proving existence and uniqueness of solution. Under suitable regularity assumptions we give an explicit construction of the transport plan. Generalization of flux constraints to higher dimensions and possible extensions of the theory are discussed.
\end{abstract}

\section{Introduction}
In recent years, the Monge-Kantorovich theory of Optimal Mass Transport has impacted a wide range of mathematical and scientific disciplines from probability theory to geophysics and from thermodynamics to machine learning \cite{gangbo1996geometry,douglas1999applications,ambrosio2005gradient,villani2009optimal,villani2021topics}. Indeed, the Monge-Kantorovich paradigm of transporting one distribution to another, by seeking to minimize a suitable cost functional, has proved enabling in many ways. It gave rise to a class of control problems \cite{chen2021stochastic,chen2021controlling}, underlies variational principles in physics \cite{jordan1998variational,otto2001geometry}, provided natural regularization penalties in inverse problems \cite{bredies2022generalized}, led to new identification techniques in data science \cite{haker2004optimal,peyre2019computational}, in graphical models \cite{fan2022complexity}, and was linked to large deviations in probability theory \cite{leonard2007large,chen2016relation}.

{\blue Historically, the Monge-Kantorovich theory proved especially relevant in economics when physical commodities were the object to be transported--a fact that contributed to L.\ Kantorovich receiving the Noble prize. Extensions that pertain to physical constraints along the transport naturally were soon brought up. For instance, moment-type constraints, have been considered in \cite[Section 4.6.3]{rachev1998mass} and, more recently, far generalized in \cite{ekren2018constrained}. Congestion being a significant impediment to transport has also drawn the attention of theorist and practitioners alike. For instance,
besides optimizing for a transportation, considerations of an added path-dependent cost to alleviate congestion has been considered in
\cite{carlier2008optimal}, see also \cite[Section 4]{santambrogio2015optimal} for a comprehensive study of this research direction. Along a different direction, constraints have been introduced for probability densities as part of the optimization problem. Such bounds can capture the capacity of the transportation medium and as such have been studied 
in \cite{korman2013insights}, or dynamical flow constraints as in \cite{gladbach2022limits}.

In the present work}, we formulate and address a natural variant of the standard optimal mass transport problem by imposing  a hard constraint on the flux rate at a point along the path between distributions. Specifically, we pose and resolve the most basic such problem where the restriction on throughput of the transport plan takes place at a single point. With this constraint in place, we seek to minimize a usual quadratic cost functional.

The analysis we provide focuses on one-dimensional distributions, with transport taking place on $\mathbb R$. We prove existence and uniqueness of an optimal transport plan, and under suitable regularity conditions, give an explicit construction. A slight generalization of our formulation, where the distributions have support on $\mathbb R^d$ but the transportation is to take place through a specified ``constriction'' point, with a similar throughput constraint, can be worked out in the same manner and it is sketched in the concluding remarks. The more general case where the transport takes place on higher dimensional manifolds with the throughput through possibly multiple points, curves, or surfaces  similarly restricted is substantially more challenging and much remains open.

The problem formulation and ideas in the mathematical analysis that follows can be visualized by appealing to Figure \ref{fig:my_label}. We begin with two probability densities $\rho_0, \rho_1$ having support on $\mathbb{R}$ and finite second-order moments, and seek to transport one to the other, $\rho_0$ to $\rho_1$, within a window of time (herein, of duration normalized to $1$) while minimizing a quadratic cost in the local velocity. That is, we seek to minimize the action integral of kinetic energy along the transport path. The minimal cost of the unrestricted transport is the so-called Wasserstein distance $\mathcal W_2(\rho_0,\rho_1)$ (a metric on the space of probability measures); we refer to standard references \cite{villani2009optimal,villani2021topics} for the unconstrained optimal transport problem. 
The schematic in Figure \ref{fig:my_label} exemplifies a constraint at a pre-specified point, $x_0$, that can be seen as the location of constriction, or, of a toll along the transport, where throughput is bounded. That is, the flow rate across $x_0$ for mass times velocity is bounded by a value $h$. A vertical axis pointing downwards at $x_0$ marks the time when a specific mass-element crosses the toll, necessitating at least $1/h$ duration for the unit mass of the probability density $\rho_0$ to go through, in the most favorable case where the throughput rate is maintained for the duration (that is normalized to $1$ time unit). 

In the body of the paper, we prove existence and uniqueness of an optimal transport plan and, assuming suitable regularity of the distributions, we provide an explicit construction for the solution. We further explore consequences of the toll being kept maximally ``busy'' while mass is being transported through, in conjunction with minimizing the quadratic cost criterion on the kinetic energy, and we highlight ensuing properties of the optimal plan.

Specifically, in Section \ref{sec:sec1} we develop the formulation of the flux constraint and give a precise definition of the problem (Problem \ref{prob:problem1}). In Section \ref{sec:sec2} we prove existence  and uniqueness (Theorem \ref{thm:thm1}) of solution, while conveniently recasting the problem in terms of a flux variable (Problem \ref{prob:problem2}).
Section \ref{sec:sec3} deals with the structural form of the transport and properties of solutions. We summarize the basic elements that allow an explicit construction of the solution in Theorem \ref{thm:4.5}. Section \ref{sec:example} provides a rudimentary example of transporting between uniform distributions, that highlights the essential property that speed needs to be suitably adjusted so as to fully utilize the throughput of the toll, while minimizing the quadratic cost. We close (Section \ref{sec:discussion}) with a discussion on possible extension of the problem to higher dimensions and multiple tolls. While the theory may be readily extended in certain cases, much remains to be understood. Such problems are of natural engineering and scientific interest.

\begin{figure}[!htb]
\centering
\tikzset{every picture/.style={line width=0.75pt}} 

\scalebox{0.65}{
\begin{tikzpicture}[x=0.75pt,y=0.75pt,yscale=-1,xscale=1]

\draw [line width=1.5]  (0.97,127.18) -- (574.8,125.58)(288.03,394.77) -- (287.27,123.25) (567.81,130.6) -- (574.8,125.58) -- (567.79,120.6) (283.01,387.78) -- (288.03,394.77) -- (293.01,387.75)  ;
\draw [color={rgb, 255:red, 0; green, 0; blue, 0 }  ,draw opacity=1 ][fill={rgb, 255:red, 74; green, 144; blue, 226 }  ,fill opacity=1 ][line width=1.5]    (59.6,126.9) .. controls (99.6,96.9) and (89.6,48.4) .. (121.6,33.4) .. controls (153.6,18.4) and (130.6,60.9) .. (164.6,67.4) .. controls (198.6,73.9) and (195.6,112.4) .. (212.6,127.4) ;
\draw [color={rgb, 255:red, 0; green, 0; blue, 0 }  ,draw opacity=1 ][fill={rgb, 255:red, 74; green, 144; blue, 226 }  ,fill opacity=1 ][line width=1.5]    (364.1,126.4) .. controls (404.1,96.4) and (395.75,38.6) .. (422.6,57.9) .. controls (449.45,77.2) and (436.6,103.9) .. (466.1,48.9) .. controls (495.6,-6.1) and (482.1,104.73) .. (515.6,125.73) ;
\draw  [color={rgb, 255:red, 74; green, 144; blue, 226 }  ,draw opacity=1 ][fill={rgb, 255:red, 74; green, 144; blue, 226 }  ,fill opacity=1 ][line width=0.75]  (288.87,170.49) -- (334.72,140.15) -- (334.67,303.66) -- (288.81,334) -- cycle ;
\draw  [color={rgb, 255:red, 208; green, 2; blue, 27 }  ,draw opacity=1 ][fill={rgb, 255:red, 208; green, 2; blue, 27 }  ,fill opacity=1 ] (334.76,252.49) -- (334.67,303.95) -- (289.49,337.5) -- (289.58,286.03) -- cycle ;
\draw  [color={rgb, 255:red, 0; green, 0; blue, 0 }  ,draw opacity=1 ][fill={rgb, 255:red, 208; green, 2; blue, 27 }  ,fill opacity=1 ][line width=1.5]  (87.85,89.65) .. controls (92.1,81.4) and (106.6,29.4) .. (104.77,63.32) .. controls (102.93,97.23) and (105.27,128.4) .. (104.77,127.32) .. controls (104.27,126.23) and (63.35,127.15) .. (59.6,126.9) .. controls (55.85,126.65) and (69.6,120.4) .. (77.35,108.4) .. controls (85.1,96.4) and (83.6,97.9) .. (87.85,89.65) -- cycle ;
\draw    (1.95,127.38) -- (572.61,125.38) (58.93,123.18) -- (58.96,131.18)(115.93,122.98) -- (115.96,130.98)(172.93,122.78) -- (172.96,130.78)(229.93,122.58) -- (229.96,130.58)(286.93,122.38) -- (286.96,130.38)(343.93,122.18) -- (343.96,130.18)(400.93,121.98) -- (400.96,129.98)(457.93,121.78) -- (457.96,129.78)(514.93,121.58) -- (514.96,129.58)(571.93,121.38) -- (571.96,129.38) ;
\draw  [dash pattern={on 4.5pt off 4.5pt}]  (334.67,303.95) -- (334.56,385.74) ;
\draw    (289.06,378.34) -- (333,346.03) ;
\draw [shift={(334.62,344.85)}, rotate = 143.68] [color={rgb, 255:red, 0; green, 0; blue, 0 }  ][line width=0.75]    (10.93,-3.29) .. controls (6.95,-1.4) and (3.31,-0.3) .. (0,0) .. controls (3.31,0.3) and (6.95,1.4) .. (10.93,3.29)   ;
\draw    (334.62,344.85) -- (290.67,377.15) ;
\draw [shift={(289.06,378.34)}, rotate = 323.68] [color={rgb, 255:red, 0; green, 0; blue, 0 }  ][line width=0.75]    (10.93,-3.29) .. controls (6.95,-1.4) and (3.31,-0.3) .. (0,0) .. controls (3.31,0.3) and (6.95,1.4) .. (10.93,3.29)   ;
\draw  [dash pattern={on 4.5pt off 4.5pt}]  (334.72,140.15) -- (356.56,140.24) ;
\draw  [dash pattern={on 4.5pt off 4.5pt}]  (334.67,303.95) -- (355.06,303.24) ;
\draw [color={rgb, 255:red, 0; green, 0; blue, 0 }  ,draw opacity=1 ][line width=1.5]    (212.6,127.4) -- (288.87,170.49) ;
\draw [shift={(257.53,152.78)}, rotate = 209.46] [color={rgb, 255:red, 0; green, 0; blue, 0 }  ,draw opacity=1 ][line width=1.5]    (14.21,-4.28) .. controls (9.04,-1.82) and (4.3,-0.39) .. (0,0) .. controls (4.3,0.39) and (9.04,1.82) .. (14.21,4.28)   ;
\draw [color={rgb, 255:red, 0; green, 0; blue, 0 }  ,draw opacity=1 ][line width=1.5]    (104.77,127.32) -- (288.18,287.07) ;
\draw [shift={(202.36,212.32)}, rotate = 221.06] [color={rgb, 255:red, 0; green, 0; blue, 0 }  ,draw opacity=1 ][line width=1.5]    (14.21,-4.28) .. controls (9.04,-1.82) and (4.3,-0.39) .. (0,0) .. controls (4.3,0.39) and (9.04,1.82) .. (14.21,4.28)   ;
\draw [color={rgb, 255:red, 0; green, 0; blue, 0 }  ,draw opacity=1 ][line width=1.5]    (59.6,126.9) -- (288.09,338.53) ;
\draw [shift={(179.57,238.02)}, rotate = 222.81] [color={rgb, 255:red, 0; green, 0; blue, 0 }  ,draw opacity=1 ][line width=1.5]    (14.21,-4.28) .. controls (9.04,-1.82) and (4.3,-0.39) .. (0,0) .. controls (4.3,0.39) and (9.04,1.82) .. (14.21,4.28)   ;
\draw    (355.06,303.24) -- (354.07,142.29) ;
\draw [shift={(354.06,140.29)}, rotate = 89.65] [color={rgb, 255:red, 0; green, 0; blue, 0 }  ][line width=0.75]    (10.93,-3.29) .. controls (6.95,-1.4) and (3.31,-0.3) .. (0,0) .. controls (3.31,0.3) and (6.95,1.4) .. (10.93,3.29)   ;
\draw    (354.06,140.29) -- (355.04,301.24) ;
\draw [shift={(355.06,303.24)}, rotate = 269.65] [color={rgb, 255:red, 0; green, 0; blue, 0 }  ][line width=0.75]    (10.93,-3.29) .. controls (6.95,-1.4) and (3.31,-0.3) .. (0,0) .. controls (3.31,0.3) and (6.95,1.4) .. (10.93,3.29)   ;
\draw  [fill={rgb, 255:red, 208; green, 2; blue, 27 }  ,fill opacity=1 ][line width=1.5]  (402.43,63.43) .. controls (405.93,55.18) and (412.18,53.93) .. (412.93,53.93) .. controls (413.68,53.93) and (413.71,125.33) .. (414.46,125.58) .. controls (415.21,125.83) and (363.43,127.43) .. (365.68,125.43) .. controls (367.93,123.43) and (382.68,111.18) .. (390.68,91.68) .. controls (398.68,72.18) and (398.93,71.68) .. (402.43,63.43) -- cycle ;

\draw (74.44,0.54) node [anchor=north west][inner sep=0.75pt]  [font=\LARGE,rotate=-358.99,xslant=0.22]  {$\rho _{_{0}}$};
\draw (511.68,2.07) node [anchor=north west][inner sep=0.75pt]  [font=\LARGE,xslant=0.09]  {$\rho _{_{1}}$};
\draw (560.79,135) node [anchor=north west][inner sep=0.75pt]  [font=\LARGE]  {$x$};
\draw (257.9,378.84) node [anchor=north west][inner sep=0.75pt]  [font=\LARGE]  {$t$};
\draw (311.84,361.59) node [anchor=north west][inner sep=0.75pt]  [font=\Large]  {$h$};
\draw (362.64,200) node [anchor=north west][inner sep=0.75pt]  [font=\LARGE]  {$\frac{1}{h}$};
%\draw (266,164.8) node [anchor=north west][inner sep=0.75pt]  [font=\large]  {$t_{0}$};
%\draw (265,328.27) node [anchor=north west][inner sep=0.75pt]  [font=\large]  {$t_{f}$};
\draw (275,100) node [anchor=north west][inner sep=0.75pt]  [font=\LARGE]  {$x_{0}$};
\end{tikzpicture}
}
\caption{Illustration of optimal transport through a toll with finite throughput}
\label{fig:my_label}
\end{figure}
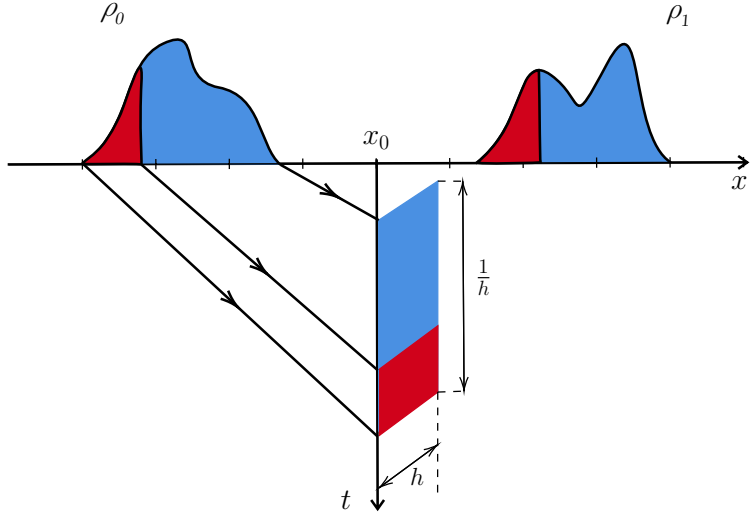

\section{Problem formulation}\label{sec:sec1}
We consider two probability densities $\rho_0, \rho_1$ on $\mathbb{R}$ having finite second-order moments.
For $Y:[0,1]\times \mathbb{R} \rightarrow \mathbb{R}$ such that\footnote{As is common, $Y_{t\# \rho_0}$ denotes the push-forward of $\rho_0$ under $Y_t$, see \cite{villani2009optimal}.} $Y_{0\# \rho_0} = \rho_0$ and  $Y_{1\# \rho_0} = \rho_1$, we are interested in minimizing 
\begin{equation}\label{eq:Jv} 
J(\partial_t Y) := \int_0^1 \int_\mathbb{R} (\partial_t Y_t(x))^2\rho_0(x)dxdt.
\end{equation}
In the absence of any  additional  constraint on $Y$, the solution is $Y^\star_t(x)= x + t(T(x)-x)$ for $T$ the optimal transport map between $\rho_0$ and $\rho_1$ and $J(\partial_tY_t^\star)=\mathcal{W}_2^2(\rho_0,\rho_1)$, 
the squared Wasserstein-2 distance between the two \cite{villani2009optimal}. 
Here however, for a certain $x_0 \in \mathbb{R}$, we introduce a constraint on the flux passing through $x_0$, as explained below. Throughout the paper, $T$ will always denote the optimal transportation plan in the absence of any such constraint. The purpose of the present work of course is to develop theory that addresses the case of transport with a bound on the flux through $x_0$.

When all functions are smooth and well defined, a flux constraint at $x_0$ can be expressed as 
$$|\rho_t(x_0)v_t(x_0)|\leq h\ \ \ \forall t\in (0,1)$$ for $\rho_t$ the density of $Y_{t\# \rho_0}$ and $v_t(x_0)=\partial_t Y_t(Y_t^{-1}(x_0))$. However in the general case, if $\rho_t$ is not continuous (or doesn't even exist), this constraint is not well defined.
One way to deal with such a situation is to recast the constraint as requiring that\footnote{We use the standard notation $\mathds{1}_A$ for the characteristic function of the set $A$.}, $\forall t\in (0,1)$,
\begin{equation}\label{eq:c1}\limsup \limits_{\substack{\alpha_1 \rightarrow 0 \\ \alpha_2 \rightarrow 0}}\ \frac{1}{|\alpha_2-\alpha_1|} \int \mathds{1}_{\{x_0 \in (Y_t(x) + \alpha_1, Y_t(x) + \alpha_2)\}} |\partial_t Y_t(x)|\rho_0(x)dx \leq h. \ \end{equation}
Then, if $\rho_0$ is continuous and $Y_t$ is a $C^1$ diffeomorphism, the left hand side (LHS) of \eqref{eq:c1}  amounts to
\begin{align*}
\mbox{LHS } \eqref{eq:c1} & = \limsup \limits_{\substack{\alpha_1 \rightarrow 0 \\ \alpha_2 \rightarrow 0}}\frac{1}{|\alpha_2-\alpha_1|}\int \mathds{1}_{\{x_0 \in (y + \alpha_1, y + \alpha_2)\}} |\partial_t Y_t(Y_t^{-1}(y))|\rho_t(y)dy \\
& = \rho_t(x_0)v_t(x_0),
\end{align*}

Interestingly, when $Y_t$ fails to be a $C^1$ diffeomorphism, special care is needed. For instance, take $x_0=0$ and $Y_t(x) = \mathds{1}_{\{x\in [-2,-1]\}}(1-2t)^3x$. The constraint \eqref{eq:c1} is satisfied since $\partial_t Y_t(x)=0$ at $t=1/2$, and no mass sits near the toll for any $t\neq 1/2$. Thus, the formulation \eqref{eq:c1} fails to capture the situation where infinite mass passes through with zero velocity. We reformulate so as to avoid this technicality. 

Consider the modified constraint that bounds the flux passing through $x_0$, expressed as requiring that $\forall t\in (0,1)$
\begin{equation}\label{eq:c2}
\limsup \limits_{\substack{\alpha_1 \rightarrow 0 \\ \alpha_2 \rightarrow 0}}\ \frac{1}{|\alpha_2-\alpha_1|} \int \mathds{1}_{\{x_0 \in (Y_{t + \alpha_1}(x), Y_{t + \alpha_2}(x))\}} \rho_0(x)dx
\leq h.
%\tag{\ref{eq:c1}'}
\end{equation}
In the case where $Y_t$ is $C^1$, using the Taylor expansion of $Y$ in time, the left hand side (LHS) of \eqref{eq:c2}  amounts to
\begin{align}
\mbox{LHS } \eqref{eq:c2}  = & \limsup \limits_{\substack{\alpha_1 \rightarrow 0 \\ \alpha_2 \rightarrow 0}} \ \frac{1}{|\alpha_2-\alpha_1|} \int \mathds{1}_{\{x_0 \in (Y_{t}(x)+\partial_tY_{t }(x)\alpha_1+o(\alpha_1), Y_{t}(x)+\partial_tY_{t }(x)\alpha_2+o(\alpha_2))\}} \rho_0(x)dx\nonumber\\
  =  & \limsup \limits_{\substack{\alpha_1 \rightarrow 0 \\ \alpha_2 \rightarrow 0}}
\int \hspace*{-5pt} \Big(
\mathds{1}_{\{Y_t(x) \in (x_0-\partial_tY_{t }(x)\alpha_2+o(\alpha_2), x_0-\partial_tY_{t }(x)\alpha_1+o(\alpha_1))\}} \frac{|\partial_t Y_t(x)|\mathds{1}_{\{|\partial_t Y_t(x)|>0\}}}{|\alpha_2-\alpha_1||\partial_t Y_t(x)|}\nonumber\\
  & +\mathds{1}
 _{\{Y_t(x) \in (x_0+o(\alpha_1), x_0+o(\alpha_2))\}} \frac{\mathds{1}_{\{\partial_t Y_t(x)=0\}}}
 {|\alpha_2-\alpha_1|}
 \Big) \rho_0(x)dx .\nonumber
 \end{align} 
Using a change of variables, we readily see that \eqref{eq:c2} implies \eqref{eq:c1} and that if $Y_t$ is a $C^1$ diffeomorphism, the two constraints are identical. Note also that, $\forall t\in (0,1)$, condition \eqref{eq:c2} is equivalent to 
\begin{equation}\label{eq:c3}
\forall \alpha_1,\alpha_2\in \mathbb{R}, \quad   \int \mathds{1}_{\{x_0 \in (Y_{t + \alpha_1}(x), Y_{t + \alpha_2}(x))\}} \rho_0(x)dx \leq h|\alpha_2-\alpha_1|.
\end{equation}

Define $\Omega = \{x\in \supp(\rho_0)\ | \ x_0 \in (x,T(x)) \text{ or } x_0 \in (T(x),x)\}$, where $T$ is the optimal transport map of the unconstrained problem. Thus, $\Omega$ contains the support of mass that needs to cross the toll station, at some point in time, in either direction.
From \eqref{eq:c3} it is evident that $h \geq \rho_0(\Omega)$ is necessary for the existence of a map satisfying the constraint (since the transport will take place over the time interval $[0,1]$). Typically, $h > \rho_0(\Omega)$ is required, except in some special cases where $h = \rho_0(\Omega)$ may suffice, as for example when $\rho_0 =\mathds{1}_{\{[0,1]\}}$, $x_0=1$ and $\rho_1 =\mathds{1}_{\{[1,2]\}}$. From here on we assume that $h> \rho_0(\Omega)$.

We are now in a position to cast our optimization problem in terms of a velocity field $v_t(x)$ that will effect the transport; formally, $v_t(x)=\partial_tY_t(x)$ relates to our earlier notation when functions are smooth. For any $v \in L^2([0,1]\times \mathbb{R}, \mathbb{R})$, define the map $Y^v:[0,1]\times \mathbb{R}\rightarrow \mathbb{R}$ as the flow of $v$:
\begin{equation}\label{eq:Yintegral}
Y^v_t=\Id+\int_0^t v_\tau d\tau,
\end{equation}
with $\Id$ denoting the identity map in $\mathbb{R}$. Our problem can now be stated as follows. 
% \begin{problem}\label{prob:problem1}
% Assume that $h>\rho_0(\Omega)$. Consider
% \begin{equation}\label{prob1} \inf \limits_{Y\in \Gamma} J(\partial_t Y),\end{equation}
% over a class $\Gamma$ of functions $Y:[0,1]\times \mathbb{R} \rightarrow \mathbb{R}$ that abide by
% \begin{itemize}
%     \item[i)] $\forall x \in \supp(\rho_0)$, $Y(\cdot,x)$ is continuous everywhere and differentiable a.e.
%     \item[ii)] $Y_{0\# \rho_0} = \rho_0$ and  $Y_{1\# \rho_0} = \rho_1$
%     \item[iii)] $\forall t \in (0,1)$, $Y_t$ satisfies the constraint \eqref{eq:c3}.
% \end{itemize}
% Determine existence, uniqueness, and a functional form for potential minimizing solution $Y$.
% \end{problem}

\begin{problem}\label{prob:problem1}
 Consider
\begin{equation}\label{prob2} 
\inf \limits_{v\in \Lambda} J(v),
\end{equation}
over the class $\Lambda$ of functions $v \in L^2([0,1]\times \mathbb{R}, \mathbb{R})$ defined so that $Y^v$, the flow of v, satisfies
\begin{itemize}
    \item[i)] $Y_{1\# \rho_0}^v = \rho_1$
    \item[ii)] $\forall t \in (0,1)$, $Y_t^v$ satisfies the constraint \eqref{eq:c3}.
\end{itemize}
Determine existence, uniqueness, and a functional form for a minimizing solution $v$.
\end{problem}

\blue
\section{Existence of a solution}\label{sec:sec2}
\black
We say that a map $Y:[0,1]\times \mathbb{R}\rightarrow \mathbb{R}$ is in the set $\Gamma$ if there exist $v\in \Lambda$ such that $Y$ is the flow of $v$, i.e. $Y_t=Y^v_t=\Id+\int_0^t v_\tau d\tau$. From here on, the $v$ in the notation $Y^v_t$ is suppressed as we are truly interested in the transport map. 
We first derive certain useful
properties of candidate minimizers of our problem.
To this end, for any $Y \in \Gamma$ and $x\in \Omega$, we define
\[
\toll_Y(x)=\inf \{ t \mid x_0=Y_t(x)\}.
\]
Thus, the function $\toll_Y$ specifies the times of transit through the toll station of mass that is initially located at $x$ and then transported via $Y$.

It is clear that the function $\toll_Y$ must be injective\footnote{This follows by cyclic monotonicity since the cost is convex, see \cite[Section 2.3]{villani2021topics}.} for a minimizing solution, and that mass flow takes place always in the same direction across the toll station. Then, $\forall t\in (0,1)$, \eqref{eq:c2} is equivalent to 
\[
\limsup \limits_{\substack{\alpha_1 \rightarrow 0 \\ \alpha_2 \rightarrow 0}}\ \frac{1}{|\alpha_2-\alpha_1|}\toll_{Y\#\rho_0}((t+ \alpha_1 ,t+ \alpha_2)) \leq h,
\]
and so, if $\toll_{Y\#\rho_0}$ (the measure on $[0,1]$ that weighs the mass that goes through $x_0$ at different times $t\in[0,1]$) admits a continuous density $\varrho_{\toll}$, the constraint amounts to $\varrho_{\toll}(t)\leq h$. Note also that this condition is different than simply stating  $\rho_t(x_0)\leq h$, as the latter doesn't take into account the speed of transport. Then we see that for $x\notin \Omega$, we can restrict ourselves to considering maps $Y\in \Gamma$ such that $Y_t(x)=x + t(T(x)-x)$ for $T$ the optimal transport map between $\rho_0$ and $\rho_1$. Thus, in the sequel, without loss of generality we always suppose that $\Omega=\supp(\rho_0)$ and that $\sup \supp(\rho_0)\leq x_0 \leq \inf \supp(\rho_1)$.

For $Y\in \Gamma$ and $\toll_Y$ its corresponding transit-time function, define the map $\overline{Y}:[0,1]\times \mathbb{R}\rightarrow \mathbb{R}$ by 
\begin{equation}\label{eq:form}
\overline{Y}_t(x) = \left\{
\begin{array}{ll}
 x+t\frac{x_0-x}{\toll_{Y}(x)} & \text{if } t\leq \toll_{Y}(x) \medskip\\
    x_0+(t-\toll_{Y}(x))\frac{T(x)-x_0}{1-\toll_{Y}(x)} & \text{if } t \geq \toll_{Y}(x) 
\end{array}
\right.
\end{equation} 
and note $\overline{\Gamma}=\{ \overline{Y} \ |\ Y \in \Gamma\}$ the set of functions of this type. The next statement states that we can restrict our minimization problem to functions of the form \eqref{eq:form}.  Specifically, it states that for any candidate minimizer $Y \in \Gamma$, the speed of transport needs to remain constant at all times prior to transit, and again, constant at all times after transit. In addition, from the functional form, we see that $Y_1=T(x)$ for all $x$. This last statement says that the final destination of mass originally located at $x$ is the same, whether we apply $T$ or the optimal plan that abides by the constraint; the only thing that changes in the two cases is the speed while the mass traverses the segment before $x_0$ and after 
(cf.\ example in Section \ref{sec:example}).
\begin{proposition}\label{prop:prop1} We have $$\inf \limits_{Y\in \Gamma} J(\partial_t Y) = \inf \limits_{\overline{Y}\in \overline{\Gamma}} J(\partial_t \overline{Y})$$

\end{proposition}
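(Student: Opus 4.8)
The plan is to show the two infima coincide by proving (a) that $\overline\Gamma\subseteq\Gamma$, so $\inf_{\overline Y\in\overline\Gamma}J(\partial_t\overline Y)\geq\inf_{Y\in\Gamma}J(\partial_tY)$ is automatic, and (b) that for every $Y\in\Gamma$ the associated $\overline Y$ satisfies $J(\partial_t\overline Y)\leq J(\partial_tY)$, which gives the reverse inequality. Step (b) is the substantive one. The key observation is that, by the injectivity of $\toll_Y$ and the fact that all mass crosses $x_0$ in the same direction (noted in the excerpt), each trajectory $t\mapsto Y_t(x)$ for $x\in\Omega=\supp(\rho_0)$ starts at $x$, reaches $x_0$ for the first time at $t=\toll_Y(x)$, and ends at $Y_1(x)$; moreover since mass does not return across $x_0$, the endpoint $Y_1(x)$ lies on the far side of $x_0$, and by the standard Wasserstein/cyclic-monotonicity argument for the final configuration we may take $Y_1(x)=T(x)$. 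The curve $\overline Y_\bullet(x)$ is then the unique minimizer of $\int_0^1(\partial_t\gamma(t))^2\,dt$ among all paths $\gamma$ with $\gamma(0)=x$, $\gamma(\toll_Y(x))=x_0$, $\gamma(1)=T(x)$: it is affine on $[0,\toll_Y(x)]$ and affine on $[\toll_Y(x),1]$, i.e. the broken geodesic. Hence for each fixed $x$,
\[
\int_0^1(\partial_t\overline Y_t(x))^2\,dt\;\leq\;\int_0^1(\partial_tY_t(x))^2\,dt,
\]
and integrating against $\rho_0(x)\,dx$ yields $J(\partial_t\overline Y)\leq J(\partial_tY)$.

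First I would make precise the reduction already flagged in the text: restricting to $\Omega=\supp(\rho_0)$ with $\sup\supp(\rho_0)\leq x_0\leq\inf\supp(\rho_1)$ costs nothing, since outside $\Omega$ both the candidate and its modification are the straight-line optimal interpolation and contribute identically to $J$. Next I would record the two structural facts about any $Y\in\Gamma$ that are needed: $\toll_Y(x)$ is well-defined and finite for a.e.\ $x$ (mass must cross by time $1$ to satisfy condition (i), given the support separation), $\toll_Y$ is injective, and the crossing is monotone in direction; then the trajectory $Y_\bullet(x)$ genuinely interpolates $x\rightsquigarrow x_0\rightsquigarrow T(x)$ with the intermediate time pinned at $\toll_Y(x)$. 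Then the pointwise-in-$x$ variational comparison above is the elementary fact that the $L^2$-norm of a velocity along a path with three prescribed waypoints (at prescribed times) is minimized by the piecewise-affine path — a one-line Cauchy–Schwarz / Jensen computation on each subinterval.

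The main obstacle is verifying that $\overline Y\in\Gamma$, i.e.\ that $\overline Y$ is itself the flow of some admissible $v\in\Lambda$: one must check that $\overline Y$ is generated by an $L^2$ velocity field (immediate from the explicit piecewise-affine form, provided $\toll_Y$ is bounded away from $0$ and $1$ on a set of full measure, or more carefully that $\frac{x_0-x}{\toll_Y(x)}$ and $\frac{T(x)-x_0}{1-\toll_Y(x)}$ are square-integrable against $\rho_0$ — which follows because $J(\partial_t\overline Y)\leq J(\partial_tY)<\infty$), that $\overline Y_{1\#}\rho_0=\rho_1$ (true because $\overline Y_1=Y_1=T$ and $T_\#\rho_0=\rho_1$), and crucially that $\overline Y$ still obeys the flux constraint \eqref{eq:c3}. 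The last point is where care is needed: the transit-time function of $\overline Y$ is exactly $\toll_{\overline Y}=\toll_Y$ by construction — each $\overline Y$-trajectory reaches $x_0$ for the first time precisely at $t=\toll_Y(x)$, since on $[0,\toll_Y(x)]$ it moves affinely and monotonically from $x$ to $x_0$ — so the pushforward measure $\toll_{\overline Y\#}\rho_0=\toll_{Y\#}\rho_0$ is unchanged, and the reformulation of \eqref{eq:c3} in terms of $\toll_{Y\#}\rho_0((t+\alpha_1,t+\alpha_2))\leq h|\alpha_2-\alpha_1|$ holds verbatim for $\overline Y$. I would write this identification of transit-time measures out carefully, as it is the linchpin that makes the modification admissibility-preserving; everything else is routine.
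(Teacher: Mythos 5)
There is a genuine gap at the heart of your step (b). The broken-geodesic (Jensen) argument with waypoints $x \rightsquigarrow x_0 \rightsquigarrow Y_1(x)$ at times $0$, $\toll_Y(x)$, $1$ only shows $J(\partial_t Y^c)\le J(\partial_t Y)$ for the piecewise-affine path $Y^c$ that keeps the \emph{original} terminal point $Y_1(x)$; but $\overline Y$ as defined in \eqref{eq:form} sends $x$ to $T(x)$. Passing from $Y_1$ to $T$ while freezing the crossing times is not cost-decreasing in general: the post-crossing cost is $\int (Y_1(x)-x_0)^2\,(1-\toll_Y(x))^{-1}\rho_0(x)\,dx$, a quadratic transport cost \emph{weighted} by $(1-\toll_Y(x))^{-1}$, and plain cyclic monotonicity of the unweighted quadratic cost does not apply to it. Concretely, if $Y$ lets mass far from the toll cross early and sends it to the far end of $\supp(\rho_1)$ (so $Y_1$ is anti-monotone but well adapted to the weights), then substituting $T$ pairs a short remaining time $1-\toll_Y(x)$ with a long remaining distance $T(x)-x_0$ and strictly increases $J$; a two-blob example makes this quantitative. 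So the pointwise inequality $J(\partial_t\overline Y)\le J(\partial_t Y)$ on which your argument rests is false for general $Y\in\Gamma$, and ``we may take $Y_1=T$'' is precisely the claim that needs proof, not a harmless normalization.

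The paper closes this gap with two observations your proposal omits: (i) for a candidate minimizer of the reduced cost $\int (Y_1(x)-x_0)^2(1-\toll_Y(x))^{-1}\rho_0(x)\,dx$, a rearrangement argument forces $\toll_Y(x)\le\toll_Y(y)$ iff $Y_1(x)\ge Y_1(y)$ (earlier crossers must travel farther); (ii) by the time-reversal symmetry of the problem (swapping $\rho_0$ and $\rho_1$), the same argument applied to the pre-crossing leg forces $\toll_Y(x)\ge\toll_Y(y)$ iff $x\le y$. Combining the two gives that $Y_1$ is increasing, hence equal to the monotone map $T$. Without (i)--(ii), or some equivalent joint re-optimization of destinations and crossing times along a minimizing sequence, the reduction to $\overline\Gamma$ does not follow. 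The rest of your argument --- the Jensen step, $\overline Y_{1\#}\rho_0=\rho_1$, the square-integrability of the piecewise-constant velocities, and the invariance of the transit-time measure $\toll_{\overline Y\#}\rho_0=\toll_{Y\#}\rho_0$, hence preservation of \eqref{eq:c3} --- is correct and coincides with the first half of the paper's proof.
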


\begin{proof}
For $Y\in \Gamma$ and $\toll_Y$, define
\begin{equation} 
\label{bar}
Y^c(x) = \left\{
\begin{array}{ll}
 x+t\frac{x_0-x}{\toll_Y(x)} & \text{if } t\leq \toll_Y(x) \medskip\\
    x_0+(t-\toll_Y(x))\frac{Y_1(x)-x_0}{1-\toll_Y(x)} & \text{if } t \geq \toll_Y(x) 
\end{array} 
\right.
\end{equation}
Thus, $Y^c$ maintains the terminal destination $Y_1(x)$ and the crossing time $\toll_Y(x)$, for the mass that was initially at $x$, while it ensures constancy of speed before and after crossing.
It follows that $Y^c \in \Gamma$ and that $J(\partial_t Y^c)\leq J(\partial Y)$, by convexity, so we can restrict $\mathcal Y$ to the set of functions that are of the form \eqref{bar} since candidate minimizers will always be of that form.

As the position $Y_1(x)$ in \eqref{bar} doesn't impact  the constraint \eqref{eq:c2}, we consider how $Y_1(x)$ may depend on the time of crossing $\toll_Y(x)$. Specifically,
$Y_1$ must be a minimum for the cost
\[
 \int_{\toll_Y(x)}^1 \int_\mathbb{R} \left(\frac{Y_1(x)-x_0}{1-\toll_Y(x)}\right)^2\rho_0(x)dxdt = \int_\mathbb{R} \frac{(Y_1(x)-x_0)^2}{1-\toll_Y(x)}\rho_0(x)dx.
\]
From this we deduce that
$\toll_Y(x)\leq \toll_Y(y)$ iff $Y_1(x)\geq Y_1(y)$. Furthermore, as the problem is reversible (we can switch $\rho_0$ and $\rho_1$), we can deduce in the same way that $\toll_Y(x)\geq \toll_Y(y)$ iff $x\leq y$. Therefore $Y_1(x)$ is increasing and we conclude that it is identical to $T$ the optimal transport map between $\rho_0$ and $\rho_1$.
\end{proof}

From Proposition \ref{prop:prop1} we also deduce that for $Y$, the flow of a (candidate) optimal solution, the map $x\mapsto \toll_{Y}$ is strictly decreasing 
on the support of $\rho_0$,
and that $Y_t$ is one to one, for all $t$. 

Let us write $\f(x)=\frac{x_0-x}{\toll_Y(x)}$ for the velocity of transport {\em prior} to crossing the toll, for the mass initially located at $x$ at the start. Then, in light of Proposition \ref{prop:prop1}, our problem is reduced to finding
\begin{align}\f\in \argmin & \int_0^1 \int_\mathbb{R} \Big( \f(x)^2\mathds{1}_{\{t\leq \frac{x_0-x}{\f(x)}\}} + \left(\frac{T(x)-x_0}{1-\frac{x_0-x}{\f(x)}}\right)^2\mathds{1}_{\{t\geq \frac{x_0-x}{\f(x)}\}}\Big) \rho_0(x)dxdt\nonumber \\\label{prob3}
= & \int_\mathbb{R} \Big( \f(x)(x_0-x) + \frac{(T(x)-x_0)^2}{1-\frac{x_0-x}{\f(x)}}\Big) \rho_0(x)dx,
\end{align} 
subject to $x\mapsto \frac{x_0-x}{\f(x)}=\toll_\f(x)$ being decreasing and bounded between 0 and 1, %(since $\frac{x_0-x}{\f(x)}=\toll(x)$)
and
\begin{equation}\label{eq:c4}\limsup \limits_{\substack{\alpha_1 \rightarrow 0 \\ \alpha_2 \rightarrow 0}}\ \frac{1}{|\alpha_2-\alpha_1|} \int \mathds{1}_{\{(t+\alpha_1)\f(x)< x_0-x < (t+\alpha_2)\f(x)\}} \rho_0(x)dx \leq h. \end{equation}
We now argue the existence of a minimizer $\f^\star$.

{\blue
\begin{proposition}
Supposing that the two probabilities densities $\rho_0,\rho_1$ have finite second-order moments, Problem \ref{prob:problem1} admits a solution.
\end{proposition}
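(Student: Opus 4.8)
The plan is to prove existence via the direct method of the calculus of variations, working with the reduced formulation \eqref{prob3}--\eqref{eq:c4} justified by Proposition \ref{prop:prop1}. First I would take a minimizing sequence $\f_n$ for the functional in \eqref{prob3}, subject to the two constraints (monotonicity of the transit time $\toll_{\f_n}(x)=(x_0-x)/\f_n(x)$ with values in $[0,1]$, and the flux bound \eqref{eq:c4}). Rather than compactify the velocities $\f_n$ directly — which is delicate because $\f_n$ could degenerate near the endpoints of $\supp(\rho_0)$ — I would instead work with the associated transit-time functions $\tau_n(x):=\toll_{\f_n}(x)$. Each $\tau_n$ is a monotone (decreasing) function on $\supp(\rho_0)$ taking values in $[0,1]$, so by Helly's selection theorem a subsequence converges pointwise (at all continuity points, hence a.e.\ with respect to $\rho_0$) to some decreasing $\tau_\star:\supp(\rho_0)\to[0,1]$. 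Setting $\f_\star(x)=(x_0-x)/\tau_\star(x)$ recovers a candidate velocity, and one recovers a candidate flow $\overline{Y}_\star$ of the form \eqref{eq:form} with transit time $\tau_\star$ and terminal map $T$.

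The second step is to check feasibility of the limit: that the flux constraint \eqref{eq:c4} passes to the limit and that $\overline{Y}_\star$ indeed pushes $\rho_0$ to $\rho_1$ at time $1$. The push-forward condition is automatic because every element of $\overline{\Gamma}$ has terminal map $T$ by Proposition \ref{prop:prop1}, and $T_{\#}\rho_0=\rho_1$. For the flux constraint, I would reformulate \eqref{eq:c4} measure-theoretically as in the remark following Proposition \ref{prop:prop1}: \eqref{eq:c4} says exactly that the push-forward measure $(\tau_n)_{\#}\rho_0$ on $[0,1]$ has density bounded by $h$, equivalently $(\tau_n)_{\#}\rho_0\bigl((t+\alpha_1,t+\alpha_2)\bigr)\le h|\alpha_2-\alpha_1|$ for all $\alpha_1,\alpha_2$. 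Pointwise convergence $\tau_n\to\tau_\star$ $\rho_0$-a.e.\ gives weak convergence of $(\tau_n)_{\#}\rho_0$ to $(\tau_\star)_{\#}\rho_0$, and the Lipschitz-type bound on cumulative mass is stable under weak convergence (it is a closed condition: evaluate on open intervals and use lower semicontinuity, or on closed intervals and use upper semicontinuity, and reconcile using that the bound holds for all intervals). Hence $(\tau_\star)_{\#}\rho_0$ also has density at most $h$, i.e.\ $\f_\star\in\Lambda$ after translating back.

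The third step is lower semicontinuity of the cost along the sequence. Writing the reduced cost as $\int_{\mathbb R}\bigl(\f(x)(x_0-x)+\frac{(T(x)-x_0)^2}{1-\toll_\f(x)}\bigr)\rho_0(x)\,dx$, I would re-express it purely in terms of $\tau(x)$: since $\f(x)(x_0-x)=(x_0-x)^2/\tau(x)$, the integrand is $\frac{(x_0-x)^2}{\tau(x)}+\frac{(T(x)-x_0)^2}{1-\tau(x)}$, which for each fixed $x$ is a continuous, in fact convex, function of $\tau\in(0,1)$, bounded below by $0$ and going to $+\infty$ at the endpoints. Pointwise convergence $\tau_n\to\tau_\star$ a.e.\ plus Fatou's lemma then gives $J(\partial_t\overline{Y}_\star)\le\liminf_n J(\partial_t\overline{Y}_n)=\inf$, so $\f_\star$ is a minimizer. (One should note that $\inf J<\infty$ because $h>\rho_0(\Omega)$ guarantees at least one feasible flow, e.g.\ the one with affine transit time, and the finite-second-moment hypothesis makes its cost finite.)

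The main obstacle I anticipate is the boundary behavior of $\tau_\star$: a priori the limit could have $\tau_\star(x)=0$ or $\tau_\star(x)=1$ on a set of positive $\rho_0$-measure, where $\f_\star$ blows up and the cost integrand is $+\infty$. This cannot happen at a true minimizer, but to run the direct method cleanly one must argue it away — e.g.\ by observing that any feasible $\f$ with $\inf\toll_\f>0$ exists (again using $h>\rho_0(\Omega)$ to leave slack in the flux constraint, so transit can be spread over a subinterval $[\epsilon,1-\epsilon]$), giving a finite competitor, and then that along a minimizing sequence one may truncate the transit times into a compact subinterval $[\epsilon_n,1-\epsilon_n]$ with $\epsilon_n\to0$ slowly enough not to increase the cost in the limit, keeping $\tau_\star$ away from $\{0,1\}$ $\rho_0$-a.e. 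Handling this degeneration, together with verifying that the monotonicity constraint on $\tau_\star$ genuinely survives the limit (Helly gives this for free) and that $\f_\star$ is not merely a minimizer of the relaxed problem, is where the real care is needed.
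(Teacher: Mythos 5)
Your proposal is correct and follows essentially the same route as the paper: a minimizing sequence, compactness of the monotone transit-time functions (your appeal to Helly's selection theorem is exactly the paper's hand-rolled diagonal argument over a dense sequence), stability of the flux constraint via weak convergence of the push-forward measures $\toll_{\#}\rho_0$, and lower semicontinuity of the cost. If anything you are more careful than the paper about the possible degeneration $\tau_\star\in\{0,1\}$ on a positive-measure set, which the paper dismisses in one line by noting that $\{\toll=0\}$ must be $\rho_0$-null along a minimizing sequence.
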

}

\begin{proof}
Let  $(\f_n)_n$ be a minimizing sequence of \eqref{prob3} and write $toll_n:\supp(\rho_0)\rightarrow (0,1)$ the associated toll function: $toll_n(x)=\frac{x_0-x}{\f_n(x)}$. Let $(\alpha_k)_k$ be a dense sequence in $\supp(\rho_0)$ (for example the rational numbers). By compactness, we have that $\forall k\in \mathbb{N}$, $toll_n(a_k)$ admits a converging subsequence in $n$. Then using a diagonal argument, 
there exist a subsequence $(\f_{\varphi(n)})_n$ and $\beta_k\in [0,1]$ such that, $\forall k\in \mathbb{N} $,
$toll_{\varphi(n)}(a_k)\xrightarrow[n \to +\infty]{} \beta_k$ and $\alpha_k\leq \alpha_l \iff  \beta_k\leq \beta_l$. For $x\in \supp(\rho_0)$,
and $(\alpha_{\psi(k)})_k$ a decreasing subsequence converging to $x$, let be $toll(x)=\lim_k \beta_{\psi(k)}$, which is well defined as $\beta_{\psi(k)}$ is decreasing. 
Then $toll_{\varphi(n)}(x)$ converges to $toll(x)$ for any $x$ being a point of continuity of $toll$. 
As $toll$ is a nonincreasing map, it has at most a countable number of points of discontinuity, therefore $toll_{\varphi(n)}$ converges to $toll$ a.e. In particular we get that $toll_{\varphi(n)\# \rho_0}$ converges weakly to $toll_{\# \rho_0}$. For $x\in \supp(\rho_0)\setminus \{x\ | \ toll(x)=0\}$, define $\f(x)=\frac{ x_0-x}{toll(x)}$, it is well defined a.e. because $\{x\ | \ toll(x)=0\}$ has measure 0 as $(\f_n)_n$ is a minimizing sequence. Then $\f_{\varphi(n)}$ converges a.e. to $\f$  and as the constraint \eqref{eq:c4} is equivalent to 
\[
\forall \alpha_1,\alpha_2\in \mathbb{R}, \quad   \ \toll_{\f}((t+ \alpha_1 ,t+ \alpha_2)) \leq h|\alpha_2-\alpha_1|,
\]
$\f$ verifies the constraint. Finally, by lower semi continuity of the cost, $\f$ is a minimizer of \eqref{prob3}. \end{proof}

%\red In the following part, for the definition of $\mathcal{U}$, I changed $L^2$ for $L^1$ because interestingly even if $\f^\star$ is in $L^2$ we can have $u^\star \notin L^2$. Take for example the unconstrained problem with $x_0=0$, $\rho_0=\mathds{1}_{\{-2,-1\}}\frac{2}{\sqrt{-1-x}},  \rho_1=\mathds{1}_{\{1,2\}}\frac{2}{\sqrt{2-x}}$. Then $\f^\star(x)=3$ and $u^\star_t(x)^2=36\rho_0(x)^2 = 36 \mathds{1}_{\{-2,-1\}}\frac{2}{-1-x}$ is not integrable \black

\blue
\section{Uniqueness of the solution}\label{sec:sec2bis}
\black
Before we proceed with the proof of uniqueness of the minimizer, we recast
our problem in terms of flux as the optimization variable. For $u \in L^1([0,1]\times \mathbb{R},\mathbb{R})$, a candidate flux (i.e., mass times velocity), define a corresponding mass-measure $\rho_t^u$ on $\mathbb{R}$ by duality via: $\forall \phi \in C_c^\infty(\mathbb{R},\mathbb{R})$,\
$$
\int_\mathbb{R} \phi(x)d\rho_t^u(x) = \int _\mathbb{R} \phi(x)\rho_0(x)dx+\int_0^t\int_\mathbb{R} (\nabla \phi(x))u_r(x)dxdr .
$$
Equivalently, we have that $\rho^u$ solves in the weak sense the continuity equation 
$$
\left\{\begin{array}{ll}
 \partial_t\rho_t^u = - \nabla \cdot u\\
    \rho_0^u=\rho_0
\end{array}\right..
$$
For a flux $u$ such that $\forall t \in (0,1)$, $\rho^u_t$ admits a positive density, let us express the cost of $u$ as
\newcommand{\J}{{\rm J}}
\begin{equation}\label{eq:convexcost}
\J(u)= \int_0^1 \int_\mathbb{R} \frac{u_t(y)^2}{\rho_t^u(y)}dydt
\end{equation}
In the above, by a slight abuse of notation as it is often done, we used $\rho^u$ to denote both the measure and the corresponding density, allowing these to be distinguished by the specific usage and context.

\begin{problem}\label{prob:problem2}
 Consider
\begin{equation}\label{prob3a}
\inf \limits_{u\in {\mathcal U}} \J(u). 
\end{equation} 
over the class ${\mathcal U}$ defined as the set of functions $u \in L^1([0,1]\times \mathbb{R},\mathbb{R})$ a.e.\ such that
\begin{itemize}
    \item[i')] $\forall t \in (0,1)$, $\rho^u_t$ admits a positive density and $\rho^u_1 = \rho_1$ 
    \item[ii')] satisfy \begin{equation}\label{eq:c5_earlier}
    \forall t \in (0,1), \limsup \limits_{x_1 \rightarrow x_0,\ x_2 \rightarrow x_0}\ \frac{1}{|x_2-x_1|} \int_{x_1}^{x_2} |u_t(y)|dy \leq h.\end{equation} 
\end{itemize}
Determine existence, uniqueness, and a functional form for a minimizing solution $u$.
\end{problem}

We will first
prove the equivalence of the above formulation in Problem \ref{prob:problem2} with that in Problem
\ref{prob:problem1}.
The advantage of Problem \ref{prob:problem2} is that the constraint is now convex which will be convenient in proving
uniqueness. Note that here we use roman $\J$ with argument the flux field, to echo the earlier usage in \eqref{eq:Jv} where the action integral $J$ first appeared with argument the velocity.

{\blue
\begin{proposition}
\label{prop:propequivalence}
Problems \ref{prob:problem1} and \ref{prob:problem2} are equivalent.
\end{proposition}
}

\begin{proof}
Let $Y\in \overline{\Gamma}$ be a solution of Problem \ref{prob:problem1}, $v_t(\cdot)=\partial_tY_t(Y_t^{-1}(\cdot))$ the associated velocity (defined everywhere except at the points $(toll(x),x)$, for all $x\in \supp(\rho_0)$)  and $\rho_t=Y_{t\#\rho_0}$ the associated mass flow. Then for $\phi \in C_c^\infty(\mathbb{R},\mathbb{R})$ we have 
\begin{align*}
\int_\mathbb{R} \phi(x)d\rho_t(x) & = \int_\mathbb{R} \phi(Y_t(x))\rho_0(x)dx\\
& = \int_\mathbb{R} \Big(\phi(Y_0(x))+\int_0^t \partial_t\phi(Y_r(x))dr\Big)\rho_0(x)dx\\
& = \int_\mathbb{R} \Big(\phi(Y_0(x))+\int_0^t \nabla \phi(Y_r(x)) v_r(Y_r(x))dr\Big)\rho_0(x)dx
\\
& = \int_\mathbb{R} \phi(x) \rho_0(x)dx +\int_0^t\int_\mathbb{R} \nabla \phi(x) v_r(x)d\rho_r(x)dr.
\end{align*}
Therefore $Y$ defines a unique flux $u\in L^1([0,1]\times \mathbb{R},\mathbb{R})$ ($u$ is $L^1$ by Jensen inequality) by $u_t(x)=v_t(x)\rho_t(x)$ with $\J(u)=J(\partial_t Y)$. Furthermore, for $\f(x)=\partial_tY_0(x)=\frac{x_0-x}{\toll_\f(x)}$ we have that the left hand side of \eqref{eq:c4} amounts to
\begin{align*}
   \mbox{LHS }\eqref{eq:c4} & = \limsup \limits_{\substack{\alpha_1 \rightarrow 0 \\ \alpha_2 \rightarrow 0}}\  \int \mathds{1}_{\{Y_t(x) +\alpha_1\f(x)< x_0< Y_t(x)+\alpha_2\f(x)\}} \frac{\f(x)}{|\alpha_2-\alpha_1|\f(x)}\rho_0(x)dx\\
    & = \limsup \limits_{\substack{\epsilon_1 \rightarrow 0 \\ \epsilon_2 \rightarrow 0}}\  \frac{1}{|\epsilon_2-\epsilon_1|}\int \mathds{1}_{\{y +\epsilon_1< x_0< y+\epsilon_2\}} v_t(y)\rho_t(y)dy.
\end{align*}
Therefore $u\in {\mathcal U}$ and we conclude that $\inf \limits_{u\in {\mathcal U}} \J(u)\leq \min \limits_{Y \in \Gamma} J(\partial_t Y)$.

For establishing the reverse direction, let $u\in {\mathcal U} \cap C([0,1],C^1_c(\mathbb{R},\mathbb{R}))$ with $\J(u)< \infty$ and define $T_t^u$ the optimal transport map between $\rho_0$ and $\rho_t^u$. For $F_t(x)=\int_{-\infty}^x \rho_t^u(x) dx$ the cumulative distribution function of $\rho_t^u$, it is well known that $T_t^u(x)=F_t^{-1}(F_0(x))$, see \cite[Chapter 1]{villani2021topics}. Since $\forall t\in [0,1]$ we have  $F_t(F_t^{-1}(F_0(x)))=F_0(x)$, differentiating this expression we have
\begin{align*}
    \partial_t F\bigg|_{(t,x)=(t,F_t^{-1}(F_0(x)))}+ \partial_x F\bigg|_{(t,x)=(t,F_t^{-1}(F_0(x)))} \partial_t F_t^{-1}(F_0(x)) & = 0\\
    \int_{-\infty}^{F_t^{-1}(F_0(x))} \partial_t d\rho^u_t(x) + \rho^u_t(F_t^{-1}(F_0(x)))\partial_t F_t^{-1}(F_0(x)) & = 0\\
    \int_{-\infty}^{T_t^u(x)} -\nabla u_t(z)dz + \rho^u_t(T_t^u(x))\partial_t T_t^u(x) & = 0\\
    \Rightarrow \partial_t T_t^u(x) & = \frac{u_t(T_t^u(x))}{\rho^u_t(T_t^u(x))}. 
\end{align*}
Therefore $T_t^u$ defines a map in $\Gamma$ such that $\J(u)=J(\partial_t T_t^u)$. Then, since the space $C([0,1],C^1_c(\mathbb{R},\mathbb{R}))$ is dense in $L^1([0,1]\times\mathbb{R},\mathbb{R})$, we deduce that $\min \limits_{u\in {\mathcal U}} \J(u)= \min \limits_{Y \in \Gamma} J(\partial_t Y)$.\end{proof}

{\blue
Using the equivalence of Problem \ref{prob:problem1} and Problem \ref{prob:problem2}, we can now prove the uniqueness of the minimizer.
\begin{theorem}\label{thm:thm1}
Problem \ref{prob:problem2} (and so Problem \ref{prob:problem1}) admits a unique solution.
\end{theorem}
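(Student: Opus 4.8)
The plan is to cash in the convexity that the flux formulation of Problem \ref{prob:problem2} provides. First I would observe that the feasible set $\mathcal U$ is convex: $u\mapsto\rho^u$ is affine (it is the solution operator of the \emph{linear} continuity equation with the fixed datum $\rho_0$), so the terminal constraint $\rho^u_1=\rho_1$ is affine, the requirement that each $\rho^u_t$ be a positive density is convex (a convex combination of positive densities is one), and the throughput bound \eqref{eq:c5_earlier} is convex in $u$, being a $\limsup$ of the convex functionals $u\mapsto\frac{1}{|x_2-x_1|}\int_{x_1}^{x_2}|u_t(y)|\,dy$. Second, the objective is jointly convex, since $(a,b)\mapsto a^2/b$ is convex on $\mathbb R\times(0,\infty)$ and $u\mapsto\rho^u$ is affine, so $\J$ is convex on $\mathcal U$. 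A solution exists (as shown above), the feasible set is nonempty with finite-cost elements (since $h>\rho_0(\Omega)$), so the infimum is finite; by Proposition \ref{prop:propequivalence} it then suffices to prove that the minimizer of Problem \ref{prob:problem2} is unique.

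Let $u^0,u^1\in\mathcal U$ be two minimizers, with densities $\rho^0,\rho^1$, and set $u:=\tfrac12(u^0+u^1)$, $\rho:=\tfrac12(\rho^0+\rho^1)=\rho^u$. Convexity of $\mathcal U$ gives $u\in\mathcal U$, and joint convexity of the integrand gives $\J(u)\le\tfrac12\J(u^0)+\tfrac12\J(u^1)=\inf_{\mathcal U}\J\le\J(u)$; hence every inequality is an equality. Consequently the pointwise nonnegative integrand
\[
\tfrac12\frac{(u^0_t(y))^2}{\rho^0_t(y)}+\tfrac12\frac{(u^1_t(y))^2}{\rho^1_t(y)}-\frac{(u_t(y))^2}{\rho_t(y)}
\]
has zero integral, so it vanishes for a.e.\ $(t,y)$. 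The Hessian of $(a,b)\mapsto a^2/b$ is rank one with kernel the line through $(a,b)$, so this function is affine exactly along rays through the origin; equality in its convexity inequality therefore forces $(u^0_t(y),\rho^0_t(y))$ and $(u^1_t(y),\rho^1_t(y))$ to be parallel, i.e., since both densities are strictly positive,
\[
v_t(y):=\frac{u^0_t(y)}{\rho^0_t(y)}=\frac{u^1_t(y)}{\rho^1_t(y)}\quad\text{for a.e.\ }(t,y).
\]
Thus the two minimizers induce one and the same velocity field.

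It remains to promote ``same velocity field'' to ``same plan.'' Through the equivalence with $\overline{\Gamma}$, the minimizers correspond to maps $Y^0,Y^1$ of the rigid piecewise-linear form \eqref{eq:form}, each determined by its transit-time function $\toll^i$ (and with $Y^i_t$ one-to-one for every $t$, as noted after Proposition \ref{prop:prop1}). Both $\rho^0$ and $\rho^1$ solve the continuity equation $\partial_t\rho=-\nabla\cdot(\rho v)$ with datum $\rho_0$ for the \emph{common} field $v$; invoking uniqueness for this transport equation (equivalently, uniqueness of the flow $\partial_t Y_t=v_t(Y_t)$, $Y_0=\Id$) yields $\rho^0=\rho^1$ and $Y^0=Y^1$, whence $u^0=v\rho^0=v\rho^1=u^1$.

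The main obstacle is precisely this last step: deducing coincidence of the plans from a.e.\ coincidence of the velocities requires a well-posedness statement for the continuity equation (or the flow ODE) at the modest regularity available here. The cleanest route is to exploit the structure already in hand --- that optimal maps lie in $\overline{\Gamma}$ and are fixed by their monotone toll functions --- to show that a.e.\ equality of the velocities entails a.e.\ equality of $\toll^0$ and $\toll^1$ on $\supp(\rho_0)$ (using, as in the existence proof, that $\{\toll^i=0\}$ is $\rho_0$-null), after which the explicit formula \eqref{eq:form} gives $Y^0=Y^1$ outside the countable discontinuity set of the toll, hence everywhere by push-forward. All remaining ingredients are routine once the convexity of the flux formulation is set up.
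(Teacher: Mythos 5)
Your proof is correct and follows essentially the same route as the paper: both exploit the convexity of the flux formulation and the equality case of the convexity of $(a,b)\mapsto a^2/b$ (strict except along rays through the origin) to force the two minimizers to share a velocity field. In fact you are more careful than the paper, which stops at ``$v_1=v_2$'' and leaves implicit the final passage from a common velocity field back to a common flux/plan --- the step you explicitly identify and sketch via the rigidity of maps in $\overline{\Gamma}$ and their monotone toll functions.
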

}

\begin{proof}
Suppose that we have $u_1$ and $u_2$, two solutions of \eqref{prob3}. For $\lambda\in (0,1)$, 
by convexity we have that $\frac{(\lambda u_1+(1-\lambda)u_2)^2}{\lambda\rho^{u_1}+(1-\lambda)\rho^{u_2}}\leq \lambda\frac{u_1^2}{\rho^{u_1}}+(1-\lambda)\frac{u_2^2}{\rho^{u_2}}$, 
but as they are both solutions, this is an equality. 
However the polynomial $\lambda \mapsto
%(\lambda (u_1-u_2)°+u_2)^2
(\lambda (u_1-u_2)+u_2)^2
-(\lambda(\rho^{u_1}-\rho^{u_2})+\rho^{u_2}) (\lambda(\frac{u_1^2}{\rho^{u_1}}-\frac{u_2^2}{\rho^{u_2}})+\frac{u_2^2}{\rho^{u_2}})$ 
is identically zero iff $\frac{u_1}{\rho_1}=\frac{u_2}{\rho_2}$, and iff $v_1=v_2$.\end{proof}

%We continue by developing properties and the structure of the solution.

{\blue
\section{Properties and structural form of the solution under smoothness assumption}\label{sec:sec3}
We are now in a position to build explicitly the solution $\f^\star$ of Problem \eqref{prob3} in the case when $\rho_0$ and $\rho_1$ have additional smoothness assumptions. All along this section, we will assume that $\rho_0$ and $\rho_1$ are continuous, have bounded convex support, and are bounded from below on the interior of their support. In the process of building the solution, we also establish structural properties of the solution.
}

Under the stated assumptions on $\rho_0,\rho_1$, by using the closed-form expression for the optimal transport map $T$ in dimension one \cite[Chapter 1]{villani2021topics}, it is immediate to see that $T$ is $C^1$.

Recall first that, without loss of generality, we assume that $\sup \supp(\rho_0)\leq x_0\leq \inf \supp(\rho_1)$.
For $\f:\supp(\rho_0)\rightarrow \mathbb{R}$ such that\footnote{The notation $\toll_\f$ signifies $\toll_Y$, for the corresponding $Y$ obtained via \eqref{eq:Yintegral}.} $x\mapsto \toll_\f(x)= \frac{x_0-x}{\f(x)}$ is decreasing and bounded between 0 and 1 on $\supp(\rho_0)$, the expression 
$$C_y(\f) = \limsup \limits_{\substack{\alpha_1 \rightarrow 0 \\ \alpha_2 \rightarrow 0}}\ \frac{1}{|\alpha_2-\alpha_1|} \int \mathds{1}_{\{(\toll_\f(y)+\alpha_1)\f(x)< x_0-x < (\toll_\f(y)+\alpha_2)\f(x)\}} \rho_0(x)dx$$
gives the value of the flux passing through the toll station when the mass initially at $y$ is crossing. \blue Let first prove that from the additional assumptions on $\rho_0$ and $\rho_1$, we have that the solution is continuous.

\begin{proposition}
The solution $\f^\star \in L^2$ admits a continuous representative.
\end{proposition}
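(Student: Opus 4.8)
The strategy is to exploit the explicit structural form \eqref{eq:form} together with the variational characterization of the optimal solution to show that the toll-time map $\toll_{\f^\star}$ is continuous; since $\f^\star(x) = (x_0-x)/\toll_{\f^\star}(x)$, continuity of $\toll_{\f^\star}$ (away from the at most one point where it vanishes, which by the minimizing property has measure zero and in fact cannot occur in the interior under our lower bounds) immediately yields a continuous representative for $\f^\star$. So the real content is: $\toll_{\f^\star}$ is continuous on $\supp(\rho_0)$.

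First I would recall what we already know about $\toll_{\f^\star}$: by Proposition \ref{prop:prop1} and the remark following it, $\toll_{\f^\star}$ is strictly decreasing on $\supp(\rho_0)$, and $Y_1 = T$ with $T$ now $C^1$ by the smoothness assumptions. A monotone function has only jump discontinuities, so suppose for contradiction that $\toll_{\f^\star}$ jumps at some $\bar x$ in the interior of $\supp(\rho_0)$, say with left limit $\tau^- > \tau^+$ right limit. I would then build a competitor $\tilde\f$ that "fills in" the gap of crossing times $(\tau^+,\tau^-)$ by reassigning the toll-times of mass in a small interval $(\bar x-\delta,\bar x]$ (respectively $[\bar x,\bar x+\delta)$) so that $\toll_{\tilde\f}$ becomes continuous across $\bar x$, keeping the terminal map equal to $T$. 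Because $\tau^- > \tau^+$, during the time window $(\tau^+,\tau^-)$ the flux through $x_0$ in the original plan is zero — no mass crosses — so the constraint \eqref{eq:c4} leaves room, and spreading a small amount of mass' crossing times into that window is feasible for $\delta$ small. The cost \eqref{prob3}, written as $\int_\mathbb{R}\big(\f(x)(x_0-x) + (T(x)-x_0)^2/(1-\toll_\f(x))\big)\rho_0(x)\,dx$, is a strictly convex functional of $\toll_\f$ pointwise (each integrand is, for $x\neq x_0$, strictly convex in $\toll_\f(x)$), and the competitor redistributes the crossing-time "budget" to a strictly more balanced configuration; I would compute the first-order change and show it is strictly negative, contradicting optimality. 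The boundedness-below of $\rho_0,\rho_1$ and the $C^1$-regularity of $T$ enter here to guarantee the integrands and their derivatives are well-controlled near $\bar x$ and that the perturbation is genuinely cost-decreasing rather than neutral.

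Concretely, I would parametrize the perturbation by a small mass transfer: push a fraction of the mass just left of $\bar x$ to cross slightly later and a fraction just right of $\bar x$ to cross slightly earlier, choosing the profile so that the resulting $\toll$ is Lipschitz near $\bar x$; since the old configuration is a limit of such profiles as the transferred mass tends to $0$ but the old configuration has a jump, the derivative of the cost with respect to the transfer parameter at $0$ is strictly negative by the strict convexity (a jump is never a stationary point of a strictly convex functional subject only to the mass/terminal constraints, once the flux constraint is slack in the relevant window). One must also treat discontinuities at the endpoints of $\supp(\rho_0)$: there the jump would correspond to $\toll$ reaching $0$ or $1$, i.e. $\f^\star$ blowing up or mass crossing instantaneously; the lower bound on the densities together with $h > \rho_0(\Omega)$ rules this out, since it would force an atom of crossing-time or infinite cost.

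The main obstacle is making the competitor construction rigorous at the level of the flux constraint \eqref{eq:c4}: one must verify that after the reassignment the $\limsup$ defining the instantaneous flux stays $\le h$ for every $t$, including at the new crossing times introduced in the window $(\tau^+,\tau^-)$, and that the map $x\mapsto\toll_{\tilde\f}(x)$ remains monotone (so that $\tilde\f$ is still of the admissible form). This is a bookkeeping argument — the window had zero flux and we inject an amount $o(1)$ spread over an interval of times of length $\Theta(1)$ — but it requires care because \eqref{eq:c4} is a pointwise-in-$t$ constraint and the reassignment must be designed so the injected flux density is itself bounded (choosing a Lipschitz interpolating profile for $\toll_{\tilde\f}$ does exactly this). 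Everything else — the strict convexity computation, the measure-zero of $\{\toll=0\}$, the passage from continuity of $\toll_{\f^\star}$ to continuity of $\f^\star$ — is routine.
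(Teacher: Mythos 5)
Your overall strategy matches the paper's: reduce to continuity of the monotone map $\toll_{\f^\star}$, suppose a jump at some $\bar x$, observe that the jump creates a time window $(\tau^+,\tau^-)$ during which no mass crosses (so the flux constraint has slack there), and contradict optimality via a local perturbation plus strict convexity. The gap is in the key step. You assert that symmetrically ``filling in'' the gap---moving crossing times from both sides of $\bar x$ into the window---is first-order cost-decreasing because ``a jump is never a stationary point of a strictly convex functional.'' That does not follow from convexity alone. Writing the pointwise cost as $c_x(\tau)=(x_0-x)^2/\tau+(T(x)-x_0)^2/(1-\tau)$, its unconstrained minimizer is $\tau^*(x)=(x_0-x)/(T(x)-x)$ (i.e.\ $\f=T-\Id$), which is continuous; at a jump of $\toll_{\f^\star}$ at most one of the two one-sided limits can agree with $\tau^*(\bar x)$, so on one side the mass crosses too late relative to $\tau^*$ and on the other it may cross too early. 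Pushing mass into the gap lowers its cost only on the side where the push is toward $\tau^*$; a symmetric fill can strictly increase the contribution of the other side. The paper's proof is precisely the needed one-sided version: it splits into the cases $\f^\star(\bar x)<T(\bar x)-\bar x$ and $\f^\star(\bar x)\ge T(\bar x)-\bar x$ and, in each, perturbs the velocity by a small constant on a small interval on one side only, moving it toward $T-\Id$; convexity then yields the strict cost decrease and the gap yields feasibility. Relatedly, the directions you state (mass just left of $\bar x$ crosses \emph{later}, mass just right crosses \emph{earlier}) would widen the gap rather than fill it---presumably a slip, but it indicates the sign analysis was not actually carried out.

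A smaller point: your feasibility argument (``the window had zero flux, so there is room'') needs the same care the paper tersely supplies. The flux may equal $h$ at crossing times arbitrarily close to the edges of the gap, so one must check that the perturbed mass's new crossing times land strictly inside the zero-flux window and that the flux at the vacated times only decreases; with a one-sided perturbation of small enough amplitude this works. The remaining steps of your outline---that a monotone function has only jump discontinuities, and that continuity of $\toll_{\f^\star}$ transfers to $\f^\star$ away from the zeros of $\toll_{\f^\star}$---are correct and routine.
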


\begin{proof}
From section 1, we know that the solution $\f^\star \in L^2$ admits a representative such that the function $x\mapsto \toll_{\f^\star}(x)= \frac{x_0-x}{\f^\star(x)}$ is decreasing. Now by absurd, suppose that $\f^\star$ is not continuous. Then there exists $x_0\in \supp(\rho_0)$ and $\epsilon>0$ such that $\forall \delta>0$, $\exists x_\delta\in \supp(\rho_0) $ with $|x_0-x_\delta|<\delta$ and $|\f^\star(x_0)-\f^\star(x_\delta)|>\epsilon$. As $\toll_{\f^\star}$ is decreasing, we have that for $\delta$ small enough, 
$$\frac{\f^\star(x_0)-\f^\star(x_\delta)}{|\f^\star(x_0)-\f^\star(x_\delta)|} = \frac{x_0-x_\delta}{|x_0-x_\delta|}
$$
so $\toll_{\f^\star}$ is not continuous in $x_0$ neither. Suppose now that $\forall \delta>0$, $x_\delta-x_0>0$ (the proof would be the same for $x_\delta-x_0<0$). Then we have that
$$
\lim \limits_{\substack{x \rightarrow x_0 \\ x>x_0}}\ \toll_{\f^\star}(x)< \toll_{\f^\star}(x_0).
$$
If $\f^\star(x_0)<T(x_0)-x_0$, then as $\toll_{\f^\star}$ is decreasing and $T$ is continuous, we have that for $\gamma>0$ small enough, $\f^\star(x)+2\gamma \leq T(x)-x$ for all $ x \in (x_0-\gamma,x_0]$. Then by strict convexity of $J$, the function 
$$
\f_2(x)=\f^\star(x)+\gamma \mathds{1}_{\{x \in (x_0-\gamma,x_0]\}}
$$ verifies that $J(v_2)<J(\f^\star)$. Furthermore for $\gamma$ small enough, we have that $C_x(v_2)<h$ for all $x\in (x_0-\gamma,x_0)$, as $\rho_0$ is continuous and $\toll_{\f^\star}$ is decreasing so $C_x(\f^\star)<C_{x_\delta}(\f^\star)$ for $\delta$ small enough. Therefore we have that $v_2$ is a better solution to the problem.\\
If $\f^\star(x_0)\geq T(x_0)-x_0$, then by continuity of $T$ we have that for $\gamma>0$ small enough, $\f^\star(x)+2 \gamma \geq T(x)-x$, for all $ x \in (x_0,x_0+\gamma]$. As previously we can find a better solution $\f_2(x)=\f^\star(x)-\gamma \mathds{1}_{\{x \in (x_0,x_0+\gamma]\}}$ to the problem which contradicts the fact that $\f^\star$ is the minimizer.
\end{proof}
\black

The next proposition states that at the points where $\rm \f^\star$ doesn't saturate the constraint, $\rm \f^\star$ is equal to the unconstrained transport $T-\Id$.

\begin{proposition}\label{prop2}
If there exist $y\in \supp(\rho_0)$ such that $C_y(\f^\star)  < h$, then we have $\f^\star(y)=T(y)-y$.
\end{proposition}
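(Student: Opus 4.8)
The plan is to argue by contradiction, exactly as in the continuity proposition just above, by producing a cheaper admissible competitor through a local perturbation of the speed profile $\f^\star$ near the point $y$. Suppose $C_y(\f^\star) < h$ but $\f^\star(y) \neq T(y)-y$. Since the cost in \eqref{prob3}, viewed as a function of the single-variable profile $\f$, is minimized pointwise (for fixed admissible $\toll$-ordering) at $\f(x) = T(x)-x$ — this is the content of Proposition \ref{prop:prop1}, where the terminal position $Y_1(x)$ was shown to coincide with $T(x)$ — moving $\f^\star(y)$ toward $T(y)-y$ strictly decreases the integrand $\f(x)(x_0-x) + \frac{(T(x)-x_0)^2}{1-(x_0-x)/\f(x)}$ at $x=y$. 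The two cases $\f^\star(y) < T(y)-y$ and $\f^\star(y) > T(y)-y$ are handled symmetrically (the second by the reversibility of the problem, switching $\rho_0$ and $\rho_1$), so I would treat $\f^\star(y) < T(y)-y$.

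The key steps, in order: (1) By continuity of $T$ and of $\f^\star$ (the preceding proposition), and since $C_y(\f^\star) < h$ is an open condition and $x \mapsto C_x(\f^\star)$ varies continuously (using that $\rho_0$ is continuous and $\toll_{\f^\star}$ is monotone), choose $\gamma>0$ small enough that on a neighborhood $(y-\gamma, y+\gamma) \cap \supp(\rho_0)$ one simultaneously has $\f^\star(x) + 2\gamma \le T(x)-x$ and $C_x(\f^\star) < h$. (2) Define the competitor $\f_2(x) = \f^\star(x) + \gamma\,\mathds{1}_{\{x \in (y-\gamma,y+\gamma)\}}$; check it is admissible — the perturbation keeps $\f_2(x) \le T(x)-x$ so $\toll_{\f_2}$ stays in $(0,1)$, and one must verify $\toll_{\f_2}$ remains decreasing (here a one-sided bump as in the continuity proof, $\mathds{1}_{\{x \in (y-\gamma, y]\}}$, is the safer choice to avoid breaking monotonicity, and I would use that variant). (3) Verify the flux constraint \eqref{eq:c4}/\eqref{eq:c3} still holds: on the bump the crossing times shift slightly, but since $C_x(\f^\star) < h$ strictly on the neighborhood and the map $\f \mapsto C_x(\f)$ is continuous in the relevant sense, for $\gamma$ small $C_x(\f_2) < h$ there too; off the neighborhood nothing changes. (4) Conclude $J(\f_2) < J(\f^\star)$ by strict convexity of the integrand in $\f$ near its minimizer $T-\Id$, contradicting optimality.

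The main obstacle is step (3): controlling the flux constraint under the perturbation. Increasing $\f^\star$ on a set makes mass cross the toll \emph{faster}, which both shifts the crossing times $\toll_{\f_2}(x)$ earlier and compresses the time-measure $\toll_{\f_2 \#}\rho_0$, potentially increasing its density somewhere and threatening the bound $h$. The reason this is survivable is that the constraint is violated only in the limit of a positive-measure "pile-up", whereas a $C^0$-small, compactly supported bump perturbs the density $\varrho_{\toll}$ by an amount that is itself controlled by $\gamma$; since we started with a strict inequality $C_x(\f^\star) < h$ on a neighborhood (not merely at $y$), there is slack to absorb this. Making this quantitative — i.e. showing $\sup_x |C_x(\f_2) - C_x(\f^\star)| \to 0$ as $\gamma \to 0$ on the bump — is the one genuinely technical estimate; everything else follows the template already established in the continuity proposition. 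I would also note explicitly that the argument shows more: $\f^\star$ can differ from $T-\Id$ only on the closed set where the constraint is saturated.
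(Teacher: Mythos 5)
Your proposal is correct and follows essentially the same route as the paper: a local perturbation of $\f^\star$ toward $T-\Id$ near $y$ (the paper uses a smooth compactly supported bump $g_\epsilon$ with sign chosen according to whether $\f^\star(y)$ is above or below $T(y)-y$, where you use an indicator bump), feasibility preserved by the strict slack $C_y(\f^\star)<h$ together with continuity, and a strict cost decrease from the strict convexity of the integrand in $\f$ with pointwise minimizer $T-\Id$. Your step (3), which you flag as the main technical estimate, is exactly the point the paper also relies on (and states only briefly), so there is no substantive difference in approach.
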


\begin{proof}
Suppose $\exists y \in \supp(\rho_0)$ such that $C_y(\f^\star)  < h$ and $\f^\star(y)\neq T(y)-y$. Define 
$g_\epsilon(x)=\mathds{1}_{\{x \in (y-\epsilon,y+\epsilon)\}}\epsilon^3\exp(-\frac{1}{\epsilon^2-(x-y)^2}+\frac{1}{\epsilon^2})$. Then there exist $\epsilon\neq 0 \in \mathbb{R}$ and $\delta>0$ such that $\forall x \in (y-\delta,y+\delta)$ we have $|\f^\star(x)+g_\epsilon(x)-(T(x)-x)|<|\f^\star(x)-(T(x)-x)|$ and $C_x(\f^\star+g_\epsilon) < h$, since $g_\epsilon$ introduces a vanishingly small bump at a suitable location. By strict convexity of $J$ we have that $J(\f^\star)>J(\f^\star+g_\epsilon)$ which contradicts the optimality of $\f^\star$.
\end{proof}

We can now deduce some regularity of the function $\f^\star$.
\begin{corrolary}
The optimal solution $\f^\star$ of \eqref{prob3} is $C^1$ almost everywhere.
\end{corrolary}

\begin{proof}
\blue As $T$ is $C^1$, then  $\f^\star$ is also $C^1$ at points $y$ that lie in the interior of the closed set $\{y\in \supp(\rho_0)\ |\ \f^\star(y)=T(y)-y\}$. \black  Otherwise if for some $y$ it holds that $\f^\star(y)\neq T(y)-y$, then $\exists \delta>0$ such that $\forall x \in (y-\delta,y+\delta)$, $\f^\star(x)\neq T(x)-x$ which implies by Proposition \ref{prop2} that $C_x(\f^\star) = h$. Solve the ordinary differential equation
\begin{equation}\label{eq:quadratic}
\left\{\begin{array}{ll}
  \partial_x\f(x)=\frac{\f(x)^2\rho_0(x)-h\f(x)}{h(x_0-x)} \quad \quad \text{ for } y-\delta \leq x\leq y_1\\
    \f(y+\delta)=\f^\star(y+\delta)
\end{array}\right.
\end{equation}
for 
$\f(x)$. 
It can be shown that the function $\f$ is well defined by establishing existence and uniqueness of the solution to \eqref{eq:quadratic} using the Cauchy-Lipschitz theorem and inherent boundedness.
Indeed, if 
\[
\f(x)> \frac{h}
{\inf\{\rho_0(y)\mid y\in \supp(\rho_0)\}},
\]
then $\partial_x \f(x)>0$, and so $\f$ is decreasing with decreasing value of its argument on a small interval $[x-\epsilon,x]$, and if
\begin{equation*}
    0<\f(x)<\frac{h}{\sup \{\rho_0(y)\mid y\in \supp(\rho_0)\}},
\end{equation*}
then $\partial_x \f(x)<0$, and so $\f$ is increasing (again with decreasing value of its argument) on a small interval $[x-\epsilon,x]$. As $\f^\star(y+\delta)>0$, and $\f\mapsto \frac{\f^2\rho_0(x)-h\f}{h(x_0-x)}$ is Lipschitz on any compact set, we can apply the Cauchy-Lipschitz theorem to establish existence and uniqueness. 
From the definition of $\f$, it follows
that $C_x(\f)=h$, and therefore $\f$ has the same flux as $\f^\star$. By uniqueness, $\f$ which is $C^1$ on $[y-\delta,y_1]$, is optimal, i.e.,
$\f=\f^*$.
\blue Finally as $\f^\star$ is $C^1$ on the interior of the set $\{y\in \supp(\rho_0)\ |\ \f^\star(y)=T(y)-y\}$ and is also $C^1$ on the set $\{y\in \supp(\rho_0)\ |\ \f^\star(y)\neq T(y)-y\}$, we deduce that $\f^\star$ is $C^1$ almost everywhere as the boundary of those two sets is at most countable. \black
\end{proof}

Now that we have established that $\f^\star$ is $C^1$ a.e., we can write the constraint \eqref{eq:c4} for functions $\f\in C^1(\supp(\rho_0),\mathbb{R})$ as: for $x\in \supp(\rho_0)$, a.e.
\begin{equation}\label{eq:c5} C_x(\f)=\frac{\f(x)\rho_0(x)}{1+\frac{x_0-x}{\f(x)}\partial_x\f(x)}\leq h.
\end{equation}
For $\f:\supp(\rho_0)\rightarrow \mathbb{R}$, define 
$$
J(\f)=\int_\mathbb{R} \Big( \f(x)(x_0-x) + \frac{(T(x)-x_0)^2}{1-\frac{x_0-x}{\f(x)}}\Big) \rho_0(x)dx.
$$
We can then rewrite Problem \ref{prob:problem1} in the present case where $\rho_0$ and $\rho_1$ are continuous,
have bounded convex support and are bounded from below on the interior of  their support, as follows.

\begin{problem}\label{prob:problem3} Consider 
\begin{equation}
    \min \limits_{\f\in V} J(\f)
\end{equation}
over a class $V$ of functions $\f:\supp(\rho_0)\rightarrow \mathbb{R}$, that are $C^1$ a.e.\ and are such that
\begin{itemize}
    \item[i)] the map $x\mapsto \frac{x_0-x}{\f(x)}$ is decreasing and bounded between 0 and 1
    \item[ii)] $\f$ verifies condition \eqref{eq:c5} a.e.
\end{itemize}
\end{problem}

To solve Problem \ref{prob:problem3}, we define velocity fields $\f$
on all of $\mathbb{R}$, even outside $\supp(\rho_0)$, as this suitably defined {\em prolongation} of $\f$ will be conveniently expressed as a solution of a differential equation. To this end, we note that the constraint \eqref{eq:c5} can be alternatively expressed in the form
\begin{equation}\label{eq:forrho0}
C_x^{\rm alt}({\f}):=\frac{\f(x)^2\rho_0(x)-h(x_0-x)\partial_x\f(x)}{\f(x)}\leq h.
\end{equation} 
This alternative formulation applies even for points $x$ where $\rho_0=0$, and will help define the sought prolongation for $\f^\star$.

Let us first prolong on all of $\mathbb{R}$ the optimal transport map between $\rho_0$ and $\rho_1$.
To this end, 
define $\alpha_0= \inf \supp(\rho_0),\beta_0= \sup \supp(\rho_0)$, $\alpha_1= \inf \supp(\rho_1),\beta_1= \sup \supp(\rho_1)$, and 
set
\[
T^{\rm +}(x)=\left\{
\begin{array}{ll}
T(x) & \mbox{ when }x\in \supp(\rho_0)\\
\beta_1 + x - \beta_0 & \mbox{ when }x\geq \beta_0\\
\alpha_1 + x - \alpha_0 & \mbox{ when }x \leq \alpha_0.
\end{array}\right.
\]
Let $\gamma_0,\gamma_1 \in \mathbb{R}$ be the uniquely defined points such that $\frac{x_0-\alpha_0}{\f^\star(\alpha_0)}=\frac{x_0-\gamma_0}{T^+(\gamma_0)-\gamma_0}$ and $\frac{x_0-\beta_0}{\f^\star(\beta_0)}=\frac{x_0-\gamma_1}{T^+(\gamma_1)-\gamma_1}$. The point $\gamma_1$ is the point that, when transported by $T-\Id$, crosses the toll at the same time $\beta_0$ crosses the toll when being transported by $\f^\star$. Note that we have $\gamma_0\leq \alpha_0$ and $\gamma_1\geq \beta_0$. We also prolong $\f^\star$ on the whole $\mathbb{R}$ as 
\[
\f^{\star \rm +}(x)=\left\{
\begin{array}{ll} 
T(x)-x & \mbox{ when }x\leq \gamma_0 \mbox{ or } x\geq \beta_0\\
\f^\star(\alpha_0)\frac{x_0-x}{x_0-\alpha_0} & \mbox{ when }\gamma_0\leq x \leq \alpha_0 \\
\f^\star(x) & \mbox{ when }x \in \supp(\rho_0)\\
\f^\star(\beta_0)\frac{x_0-x}{x_0-\beta_0} & \mbox{ when }\beta_0\leq x \leq \gamma_1
\end{array}\right.
\]

For notational simplicity, in the sequel, we suppress the labeling on $T^+$,$\f^{\star +}$ and use $T$, $\f^\star$ instead for the prolonged versions as well.
To build $\f^\star$, we first establish that on the points where $T-\Id$ doesn't satisfy the constraint, $\f^\star$ actually saturates the constraint.
\blue
As an immediate consequence of Proposition \ref{prop2}, we have the following lemma:
\black
\begin{lemma}\label{lemma1}
For all $x\in \supp(\rho_0)$ such that $C_x(T-\Id)> h$ we have $C_x(\f^\star)= h$.
\end{lemma}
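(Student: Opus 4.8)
The plan is to argue by contradiction, exactly in the spirit of Proposition \ref{prop2}: if $C_x(T-\Id)>h$ but $C_x(\f^\star)<h$ at some interior point $x$ of $\supp(\rho_0)$, we will show that $\f^\star$ must locally equal $T-\Id$ near $x$, and then deduce that $C_x(\f^\star)=C_x(T-\Id)>h$, contradicting the constraint. First I would invoke Proposition \ref{prop2} directly: it states that $C_y(\f^\star)<h$ implies $\f^\star(y)=T(y)-y$. So the only way to have $C_x(\f^\star)<h$ is to have $\f^\star(x)=T(x)-x$ pointwise at that $x$.

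The subtlety — and the main obstacle — is that $C_x(\f)$ as written in \eqref{eq:c5} involves the derivative $\partial_x\f(x)$, not just the pointwise value $\f(x)$. Hence knowing $\f^\star(x)=T(x)-x$ at a single point does not immediately give $C_x(\f^\star)=C_x(T-\Id)$; one needs $\partial_x\f^\star(x)=\partial_x(T(x)-x)$ as well. To get this, I would use the openness argument already present in the corollary: if $C_x(\f^\star)<h$ then by continuity of $C_\cdot(\f^\star)$ (which follows from continuity of $\rho_0$, $T$, and the $C^1$-a.e. regularity of $\f^\star$) there is a neighborhood $(x-\delta,x+\delta)$ on which $C_z(\f^\star)<h$ for all $z$. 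By Proposition \ref{prop2} applied at every such $z$, we get $\f^\star\equiv T-\Id$ on all of $(x-\delta,x+\delta)$, hence also $\partial_x\f^\star=\partial_x(T-\Id)$ on that interval. Plugging into \eqref{eq:c5} gives $C_x(\f^\star)=C_x(T-\Id)$, and by hypothesis the latter is $>h$ — contradicting $C_x(\f^\star)<h$. Therefore $C_x(\f^\star)\geq h$, and combined with the constraint $C_x(\f^\star)\leq h$ from \eqref{eq:c5}, we conclude $C_x(\f^\star)=h$.

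A minor point to handle carefully is boundary behavior: the statement is for $x\in\supp(\rho_0)$, so for endpoints $\alpha_0,\beta_0$ one should either note the claim is meant a.e. (consistent with the ``$C^1$ a.e.'' framework and the way \eqref{eq:c5} is stated ``for $x\in\supp(\rho_0)$, a.e.''), or use the prolongation $\f^{\star+}$ which is defined on a neighborhood of $\supp(\rho_0)$ so that two-sided neighborhoods are available. I would phrase the conclusion as holding almost everywhere on $\{x\in\supp(\rho_0): C_x(T-\Id)>h\}$, which is all that is used subsequently. Overall this is a short argument; the only thing requiring care is the passage from the pointwise identity $\f^\star=T-\Id$ to the identity of the fluxes $C_x$, which is why the local (neighborhood) version of Proposition \ref{prop2} rather than its pointwise version is the right tool.
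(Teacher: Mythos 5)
Your proof is correct and follows exactly the route the paper intends: the paper gives no separate argument, stating the lemma as ``an immediate consequence of Proposition \ref{prop2},'' and your writeup is precisely that deduction, with the one genuine subtlety (that $C_x$ depends on $\partial_x\f^\star$, so one needs $\f^\star=T-\Id$ on a neighborhood rather than at a point) correctly identified and handled via the openness of $\{z : C_z(\f^\star)<h\}$. Nothing further is needed.
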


We next characterize a leading  segment of the distribution corresponding to points with velocity faster than that of the optimal unconstrained transport.
It is essential that the leading edge ``speeds up'' to allow the trailing portion to pass through and meet the time constraint. Specifically, we show that $\f^\star$ is greater than $T-\Id$ at the points to the right of points where $T-\Id$ doesn't satisfy the constraint.

\begin{lemma}\label{lemma2}
For $x_1=\sup \{x\in \supp(\rho_0)\ | \ C_x(T-\Id)> h\} $ and $y_1=\sup \{x \in \mathbb{R}\ | \ C_x^{\rm alt}({\f^\star})= h\} $ we have that $\forall x \in (x_1,y_1)$, $\f^\star(x)\geq T(x)-x$.
\end{lemma}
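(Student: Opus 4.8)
The plan is to switch to the transit-time variable, which makes the flux constraint one-sided and affine. On $\supp(\rho_0)$ one has $x\le x_0$, so $\f\mapsto\toll_\f$, with $\toll_\f(x)=\frac{x_0-x}{\f(x)}$, is an order-reversing bijection; hence the assertion $\f^\star(x)\ge T(x)-x$ on $(x_1,y_1)$ is equivalent to $\toll_{\f^\star}(x)\le\toll_{T-\Id}(x)$ there. I will use the rewriting of \eqref{eq:c5}: for a $C^1$ velocity field $\f$ near $x$ one has $C_x(\f)=-\rho_0(x)/\partial_x\toll_\f(x)$, so that $C_x(\f)\le h$ means $\partial_x\toll_\f(x)\le-\rho_0(x)/h$, with equality precisely when $C_x(\f)=h$. (Both $\toll_{\f^\star}$ and $\toll_{T-\Id}$ are strictly decreasing: the first by Proposition \ref{prop:prop1}, the second because optimal-transport trajectories never cross.) Two further inputs: $\f^\star$ is $C^1$ on the open set $\{\f^\star\ne T-\Id\}$ (this is what the construction underlying the preceding corollary produces), and on that set $\f^\star$ saturates the constraint by Proposition \ref{prop2}, so $\partial_x\toll_{\f^\star}(x)=-\rho_0(x)/h$ there.

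First I would treat the part of $(x_1,y_1)$ with $x\ge\beta_0:=\sup\supp(\rho_0)$ (recall $\beta_0\le x_0$). There the prolongation of $\f^\star$ gives either $\f^\star(x)=T(x)-x$ (for $x\ge\gamma_1$), with nothing to prove, or $\beta_0\le x\le\gamma_1$ and $\f^\star(x)=\f^\star(\beta_0)\frac{x_0-x}{x_0-\beta_0}$, in which case $\toll_{\f^\star}$ equals the constant $\frac{x_0-\beta_0}{\f^\star(\beta_0)}=\toll_{T-\Id}(\gamma_1)$ (the defining relation for $\gamma_1$), hence $\le\toll_{T-\Id}(x)$ since $\toll_{T-\Id}$ is decreasing and $x\le\gamma_1$; either way $\f^\star(x)\ge T(x)-x$. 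Evaluating at $x=\beta_0$ in particular yields the boundary inequality $d(\beta_0)\le0$, where $d:=\toll_{\f^\star}-\toll_{T-\Id}$ is continuous.

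Next I would handle $x\in(x_1,\beta_0)$ (if this interval is empty, or if $T-\Id$ is already feasible everywhere so that $\f^\star=T-\Id$ by uniqueness, Theorem \ref{thm:thm1}, nothing remains). On $(x_1,\beta_0)$ the point $x$ is interior to $\supp(\rho_0)$, so $\rho_0(x)>0$ (bounded below by hypothesis), $T$ is $C^1$, and $C_x(T-\Id)\le h$ by the definition of $x_1$. Suppose, for contradiction, that the open set $A:=\{x\in(x_1,\beta_0):d(x)>0\}$ is nonempty. For $x\in A$ one has $\f^\star(x)\ne T(x)-x$, so Proposition \ref{prop2} forces $\partial_x\toll_{\f^\star}(x)=-\rho_0(x)/h$, whereas $\partial_x\toll_{T-\Id}(x)=-\rho_0(x)/C_x(T-\Id)\le-\rho_0(x)/h$; hence $d'\ge0$ on $A$, and $d$ is $C^1$ on $A\subseteq\{\f^\star\ne T-\Id\}$. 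Picking $a\in A$ and using $d(a)>0\ge d(\beta_0)$ together with continuity, set $b:=\inf\{x\in[a,\beta_0]:d(x)\le0\}$, so that $a<b\le\beta_0$, $(a,b)\subseteq A$, and $d(b)=0$; but then $d$ is $C^1$ with $d'\ge0$ on $(a,b)$ and continuous on $[a,b]$, hence non-decreasing, giving $0=d(b)\ge d(a)>0$, a contradiction. Thus $A=\emptyset$, i.e. $\f^\star\ge T-\Id$ on $(x_1,\beta_0)$; together with the previous step and continuity at $\beta_0$ this proves the lemma.

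The step I expect to be most delicate is the monotonicity argument in the last paragraph: it relies on knowing that on $\{\f^\star\ne T-\Id\}$ the minimiser is honestly $C^1$ (so that a pointwise sign of $d'$ upgrades to monotonicity of $d$) and that on $A$ the flux constraint is attained with equality rather than merely satisfied — both are consequences of Proposition \ref{prop2} and the construction in the preceding corollary, but they must be invoked carefully. The remainder — the $\f\leftrightarrow\toll_\f$ dictionary and the explicit computations on the prolongation near $\beta_0$ — is routine.
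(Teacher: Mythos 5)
Your core computation is sound where it applies: rewriting the constraint as $\partial_x\toll_{\f}\le-\rho_0/h$ and showing that $d=\toll_{\f^\star}-\toll_{T-\Id}$ has nonnegative derivative wherever $d>0$ on $(x_1,\beta_0)$ is, in integrated form, exactly the inequality the paper exploits. The paper instead locates a single interior point $z$ with $\f^\star(z)<T(z)-z$ and $\partial_x\f^\star(z)>T'(z)-1$ and reads off $C_z(\f^\star)<C_z(T-\Id)\le h$ from the flux formula, contradicting Proposition \ref{prop2}; your monotonicity statement for $d$ packages the same pointwise comparison differently. In that respect the two routes are equivalent in content.

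The genuine gap is your boundary condition $d(\beta_0)\le0$. You derive it from the prolongation of $\f^\star$ on $[\beta_0,\gamma_1]$, but the case split ``$x\ge\gamma_1$ or $\beta_0\le x\le\gamma_1$'' presupposes $\gamma_1\ge\beta_0$, and unwinding the defining relation $\frac{x_0-\beta_0}{\f^\star(\beta_0)}=\frac{x_0-\gamma_1}{T^+(\gamma_1)-\gamma_1}$ (with $T^+-\Id\equiv\beta_1-\beta_0$ to the right of $\beta_0$) shows that $\gamma_1\ge\beta_0$ is \emph{equivalent} to $\toll_{\f^\star}(\beta_0)\le\toll_{T-\Id}(\beta_0)$, i.e.\ to $d(\beta_0)\le0$, i.e.\ to the lemma's conclusion at the leading edge. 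The paper asserts $\gamma_1\ge\beta_0$ without proof, and its real justification is this very lemma (``the leading edge speeds up''); so you are assuming at $\beta_0$ precisely what is to be proved and merely propagating it leftward. Without that anchor the argument proves nothing: $d'\ge0$ on $\{d>0\}$ is perfectly consistent with $d>0$ everywhere. The paper avoids this by deriving the contradiction at an interior point of the support (the $\rho_0(a)>0$ case) and by a separate degenerate analysis when the supremum $a$ of the bad set falls where $\rho_0$ vanishes (the $\rho_0(a)=0$ case, which forces $a=y_1$ and $\supp(\rho_0)=[\alpha_0,y_1]$). To repair your proof, anchor at $y_1$ rather than $\beta_0$: when $y_1\le\beta_0$, Proposition \ref{prop2} and continuity give $\f^\star(y_1)=T(y_1)-y_1$, hence $d(y_1)=0$, and your monotonicity argument closes; when $y_1>\beta_0$ you must first prove $\gamma_1\ge\beta_0$ by an independent (variational or contradiction) argument of the paper's type, or simply adopt the paper's pointwise contradiction for that regime.
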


\begin{proof}
First note that $y_1\geq x_1$ by Lemma \ref{lemma1}. Suppose that 
\[
\{\f^\star(x)<T(x)-x\}\cap(x_1,y_1)\neq \emptyset
\]
and let
$
a=\sup \{x \in (x_1,y_1) \ | \ \f^\star(x)<T(x)-x\}
$. We consider separately the two cases $\rho_0(a)=0$ and  $\rho_0(a)>0$ below:\\[.03in]
i) If $\rho_0(a)=0$ then $\forall x\geq a,\ \rho_0(x)=0$, so
\[
\f^\star(x) = \frac{T(y_1)-y_1}{y_1-x_0}(x-x_0),
\]
as $C_x^{\rm alt}({\f^\star})= h$ for all x$\in [a,y_1]$. Furthermore, $T(a)-a=\beta_1+a-\beta_0-a=T(y_1)-y_1$ and $T(a)-a=\f^\star(a)$ so necessarily $a=y_1$ and $\supp(\rho_0)=[\alpha_0,y_1]$. Then $\exists z\in (x_1,y_1)$ such that, $\rho_0(z)>0$, $\f^\star(z)<T(z)-z$ and $\partial_x\f^{\star}(z)>T^\prime(z)-1$.\\[.03in]
ii) If $\rho_0(a)>0$, then by convexity of $\supp(\rho_0)$ we also have existence of that $z\in (x_1,y_1)$ with the same properties. In both cases we have
\[
\frac{\f^\star(z)\rho_0(z)}{1+\frac{x_0-z}{\f^\star(z)}\partial_x\f^{\star}(z)}< \frac{(T(z)-z)\rho_0(z)}{1+\frac{x_0-z}{T(z)-z}(T^\prime(z)-1)} \leq h
\]
which contradicts the definition of $y_1$.
\end{proof}

The following lemma states that if $\f^\star$ saturates the constraint on a maximal interval (i.e., such that, the points just outside do not saturate the constraint), then either $\f^\star=T-\Id$ throughout, or it is strictly greater than $T-\Id$ on a portion of the interval and strictly less than $T-\Id$ on another portion of the interval. This property is inherited by the convexity of the cost.

\begin{lemma}\label{lemma3} For $[a,b] \subset \{x \in \mathbb{R}\ | \ C_x^{\rm alt}({\f^\star})= h\}$ with $[a,b]$ of maximal size,  
$\exists x\in [a,b]$ such that $\f^\star(x)>T(x)-x$ if and only if
$\exists y\in [a,b]$ such that $\f^\star(y)<T(y)-y$.
\end{lemma}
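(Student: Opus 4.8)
The plan is to argue by contradiction, exploiting the convexity of the cost $J$ together with the flux constraint along the maximal saturated interval $[a,b]$. Suppose, without loss of generality, that $\f^\star(x) \geq T(x) - x$ for all $x \in [a,b]$ but that the inequality is strict at some interior point; the symmetric case (reversing the roles of $\rho_0$ and $\rho_1$, equivalently running time backwards) is handled identically. The idea is that if $\f^\star$ stays on one side of $T - \Id$ throughout the whole saturated interval, then the portion of mass near the endpoints where the constraint is not binding can be ``relaxed'' toward $T - \Id$ at strictly lower cost, without violating \eqref{eq:c5} anywhere — contradicting optimality.

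\textbf{Key steps.} First, I would record the consequence of maximality of $[a,b]$: just to the left of $a$ and just to the right of $b$ the constraint is slack, $C_x^{\rm alt}(\f^\star) < h$, so by Proposition \ref{prop2} (applied after the prolongation, noting $C_x = C_x^{\rm alt}$ where $\rho_0 > 0$) we have $\f^\star = T - \Id$ on a one-sided neighbourhood of $a$ and of $b$ outside $[a,b]$. In particular $\f^\star(a) = T(a) - a$ and $\f^\star(b) = T(b) - b$ by continuity. Second, using the ODE characterization \eqref{eq:quadratic}–\eqref{eq:forrho0} valid on $[a,b]$, I would compare $\f^\star$ with the solution of the same ODE anchored at the other endpoint: if $\f^\star \geq T - \Id$ with strict inequality somewhere on $(a,b)$, then since $\f^\star$ and $T - \Id$ agree at both endpoints $a$ and $b$, the function $\f^\star - (T - \Id)$ is nonnegative, vanishes at $a$ and $b$, and is not identically zero, hence has an interior maximum where its derivative vanishes and its second-order behaviour forces $\partial_x \f^\star > T' - 1$ on one side of that maximum and $\partial_x \f^\star < T' - 1$ on the other. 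Third, plugging this into the constraint expression $C_x(\f) = \f(x)\rho_0(x)/(1 + \tfrac{x_0-x}{\f(x)}\partial_x\f(x)) = h$ (which $\f^\star$ satisfies throughout $[a,b]$), and comparing with the value $C_x(T - \Id) \leq h$, I would derive that on the side where $\f^\star > T - \Id$ \emph{and} $\partial_x\f^\star > T'-1$, the quantity $C_x(\f^\star)$ would have to \emph{exceed} $h$ — because increasing $\f$ while also increasing $\partial_x \f$ relative to $T - \Id$ cannot keep the ratio pinned at $h$ once $C(T - \Id) \le h$ — which contradicts $C_x(\f^\star) = h$. This is essentially the same local comparison used in the proof of Lemma \ref{lemma2}, at the point $z$ chosen so that $\f^\star(z) > T(z) - z$ and $\partial_x \f^\star(z) > T'(z) - 1$ simultaneously, yielding the strict inequality $C_z(\f^\star) < h$ (or $> h$, depending on sign bookkeeping) that is incompatible with saturation.

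\textbf{Main obstacle.} The delicate point is the sign/monotonicity bookkeeping in the third step: the map $\f \mapsto C_x(\f)$ is not monotone in $\f$ alone — it depends jointly on $\f(x)$ and $\partial_x\f(x)$, and the denominator $1 + \tfrac{x_0-x}{\f}\partial_x\f$ can change sign in principle — so one must carefully use the constraint i) (that $\toll_\f(x) = (x_0-x)/\f(x) \in (0,1)$ and decreasing, which pins down the sign of $\f$ and bounds the denominator away from zero) to guarantee that the comparison goes the right way. Concretely, I expect to need: on $[a,b]$, $\f^\star > 0$, $x_0 - x \geq 0$ (after the normalization $\beta_0 \le x_0$), and $1 + \tfrac{x_0-x}{\f^\star}\partial_x\f^\star > 0$ because the tolls lie strictly below $1$; with these in hand, the function $z \mapsto \f(z)\rho_0(z)/(1 + \tfrac{x_0-z}{\f(z)}\partial_z\f(z))$ is strictly increasing in $\f$ when $\partial_z\f$ is increased or held fixed, which closes the argument. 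Handling the sub-case $\rho_0(z) = 0$ is dealt with exactly as in case (i) of Lemma \ref{lemma2}: then $\f^\star$ is linear in $z$ on the relevant stretch and forced to coincide with $T - \Id$ by matching the endpoint value, which already gives a contradiction with the assumed strict inequality, or else pushes the analysis to a nearby point with $\rho_0 > 0$ by convexity of $\supp(\rho_0)$.
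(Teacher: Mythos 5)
Your step 1 (the endpoint matching $\f^\star(a)=T(a)-a$ and $\f^\star(b)=T(b)-b$ from maximality of $[a,b]$, Proposition \ref{prop2}, and continuity) is sound, but the argument breaks at step 3, and in a way that cannot be repaired within your framework. The pointwise comparison does not go through: at a point $z$ with $\f^\star(z)>T(z)-z$ and $\partial_x\f^\star(z)>T'(z)-1$, the numerator of $C_z(\f^\star)$ exceeds that of $C_z(T-\Id)$, but the denominator $1+\frac{x_0-z}{\f^\star(z)}\partial_x\f^\star(z)$ can move either way, since the coefficient $\frac{x_0-z}{\f^\star(z)}$ \emph{decreases} while $\partial_x\f^\star(z)$ increases; so no inequality between $C_z(\f^\star)$ and $C_z(T-\Id)$ follows. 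The monotonicity you say you ``expect to need'' is in fact false in the relevant direction: $C_x$ is \emph{decreasing} in $\partial_x\f$ (raising $\partial_x\f$ raises the denominator when $x_0-x>0$ and $\f>0$). And even if you had $C_z(\f^\star)>C_z(T-\Id)$, that is perfectly compatible with $C_z(\f^\star)=h$, because you only know $C_z(T-\Id)\le h$ --- and you do not even know that on $[a,b]$, which typically contains the points $x_i$ where $C_x(T-\Id)>h$. This is why the analogous comparison in Lemma \ref{lemma2} works only in the opposite configuration ($\f^\star<T-\Id$ with larger derivative), where both numerator and denominator move the same favorable way and the conclusion $C_z(\f^\star)<h$ directly contradicts saturation.

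More fundamentally, no argument using only the constraint (as your steps 2--3 do) can prove this lemma. Writing $C_x(\f)=-\rho_0(x)/\toll_{\f}'(x)$, saturation on $[a,b]$ fixes $\toll_{\f^\star}'=-\rho_0/h$ there, and one can construct admissible velocity fields with $\f^\star\ge T-\Id$ on all of $[a,b]$, strict somewhere, and equality at both endpoints: nothing in the flux bound forbids mass from crossing the toll \emph{earlier} than $T$ would send it. What rules such a plan out is the cost --- it moves mass faster than necessary in its first phase. The paper's proof is accordingly variational: it shows $\partial_\epsilon\Psi_a(0)>0$ (so lowering the pre-toll speed by a small uniform $\epsilon$ on the interval strictly reduces $J$, using precisely the standing hypothesis $\f^\star\ge T-\Id$), and then repairs the constraint violation this creates near the left endpoint by an ODE-defined correction $k_\epsilon$ that re-saturates the flux on a short interval $(y,d]$ to the left of $a$, producing an admissible competitor with strictly smaller cost. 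Your opening paragraph gestures at exactly this ``relax toward $T-\Id$ at lower cost'' idea, but your key steps never implement it; to complete the proof you must build such a competitor rather than seek a pointwise contradiction in the constraint.
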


\begin{proof} Suppose that $\forall x\in [a,b]$ we have $\f^\star(x)\geq T(x)-x$ and we don't have equality on the whole interval.  Define $$\Psi_a(\epsilon) = \int_a^b \big((x_0-x)(\f^\star(x)+\epsilon)+\frac{(T(x)-x_0)^2}{1-\frac{x_0-x}{\f^\star(x)+\epsilon}}\big)\rho_0(x)dx.$$
Then we have $\partial_x\Psi_a(0) = \int_a^b (x_0-x)(1-\frac{(T(x)-x_0)^2}{(\f^\star(x)-(x_0-x))^2}\big)\rho_0(x)dx>0.$ Let be $c<a$ such that $\partial_x\Psi_c(0)>0$ and $\exists\delta>0$ with $\frac{\f^\star(c)^2\rho_0(c)-h\f^\star(c)}{h(x_0-c)}-\partial_x\f^{\star}(c) = -\delta$. Then there exist $d\in(c,a)$ such that $\partial_x\Psi_d(0)>0$ and $\exists\delta>0$ with $\frac{\f^\star(d)^2\rho_0(d)-h\f^\star(d)}{h(x_0-d)}-\partial_x\f^{\star}(d) = -\delta/2$. Let us define $k_\epsilon$ as the function solving the ODE
$$\left\{\begin{array}{ll}
  \partial_xk_\epsilon(x) = -\partial_x\f^{\star}(x) +  \frac{(\f^\star(x)+k_\epsilon)^2\rho_0(x)-h(\f^\star(x)+k_\epsilon(x))}{h(x_0-x)} \quad \quad \text{ for } x\leq d,\\
    k_\epsilon(d)=-\epsilon.
\end{array}\right.$$ 
Then for $\epsilon>0$ small enough we have $\partial_xk_\epsilon(x)<-\delta/4$, $\forall x\in (c,d)$. Therefore for $\epsilon>0$ small enough $\exists y \in (c,d)$ such that $k_\epsilon(y)=0$. Define
\begin{equation*} 
\label{bar_a}
\f_\epsilon(x) = \left\{
\begin{array}{ll}
 \f^\star(x) & \text{if } x\notin (y,b), \medskip\\
    \f^\star(x)-\epsilon & \text{if } x\in (d,b), \medskip\\
    \f^\star(x)+k_\epsilon(x) & \text{if } x\in (y,d]. \medskip
\end{array}
\right.
\end{equation*}  Then for $\epsilon>0$ small enough, $\f_\epsilon$ verifies the constraint and $J(\f_\epsilon)< J(\f^\star)$. Using the same method we can prove that having $\f^\star(x)\leq T(x)-x$ for all $ x\in [a,b]$ is impossible.
\end{proof}

We are now in a position to build explicitly $\f^\star$ using the lemmas. The process of building $\f^*$ consists of determining its value successively on intervals $[z_{y_i},y_i]$ and $[y_{i+1},z_{y_i}]$,
with
\[
\ldots> y_i> z_{y_i} > y_{i+1}>z_{y_{i+1}}>\ldots
\]
such that $\f^\star(x)\neq T(x)-x$ for $x \in [z_{y_i},y_i]$ a.e., while $\f^\star(x) = T(x) -x$ on the complement where $x \notin \bigcup \limits_i [z_{y_i},y_i]$. By Proposition \ref{prop2} we know that $C_x^{\rm alt}({\f^\star})=h$ on intervals $[z_{y_i},y_i]$, a fact that will help us determine $\f^\star$ and the succession of points that define these intervals.

\begin{figure}[!htb]
%\begin{subfigure}{\textwidth}
\centering
\tikzset{every picture/.style={line width=0.75pt}}       
\scalebox{0.9}{
\begin{tikzpicture}[x=0.75pt,y=0.75pt,yscale=-1,xscale=1]
\draw [line width=1.5]    (40.33,118.33) -- (456,118);
\draw [shift={(459,118)}, rotate = 179.95] [color={rgb, 255:red, 0; green, 0; blue, 0 }  ][line width=1.5]    (14.21,-4.28) .. controls (9.04,-1.82) and (4.3,-0.39) .. (0,0) .. controls (4.3,0.39) and (9.04,1.82) .. (14.21,4.28);
\draw [line width=1.5]    (89.33,118) .. controls (95.33,59) and (108.33,21) .. (123.33,68) .. controls (138.33,115) and (123.67,111.67) .. (189.67,111.67) .. controls (255.67,111.67) and (242.33,107) .. (266.33,39) .. controls (290.33,-29) and (296.33,118.67) .. (340.33,117.67);
\draw [line width=1.5]    (128.33,110) -- (128.33,117.67);
\draw [line width=1.5]    (161,111) -- (161,118);
\draw [line width=1.5]    (218.33,110.33) -- (218.33,118.33);
\draw [line width=1.5]    (303,111.33) -- (303,118.33);
\draw [line width=1.5]    (47.67,111.67) -- (47.67,117.67);
\draw [line width=1.5]    (89.67,109.67) -- (89.33,118);
\draw [line width=1.5]    (340.33,111.67) -- (340.33,117.67);
\draw [line width=1.5]    (383.67,112.33) -- (383.67,117);
\draw [line width=1.5]    (438.33,113) -- (438.33,119);

\draw (459,118) node [anchor=north west][inner sep=0.75pt]  [font=\Large]  {$x$};
\draw (428.33,130) node [anchor=north west][inner sep=0.75pt] [font=\Large]    {$x_{0}$};
\draw (330.67,130) node [anchor=north west][inner sep=0.75pt] [font=\Large]    {$\beta_{0}$};
\draw (40,130) node [anchor=north west][inner sep=0.75pt] [font=\Large]   {$z_{y_2}$};
\draw (77.67,130) node [anchor=north west][inner sep=0.75pt] [font=\Large]   {$\alpha_{0}$};
\draw (373.33,130) node [anchor=north west][inner sep=0.75pt] [font=\Large]    {$y_{1}$};
\draw (340.67,19.33) node [anchor=north west][inner sep=0.75pt] [font=\LARGE] {$\rho _{0}$};
\draw (117.33,130) node [anchor=north west][inner sep=0.75pt] [font=\Large]   {$x_{2}$};
\draw (151.33,130) node [anchor=north west][inner sep=0.75pt]  [font=\Large]  {$y_{2}$};
\draw (205.33,130) node [anchor=north west][inner sep=0.75pt] [font=\Large]   {$z_{y_{1}}$};
\draw (290.33,130) node [anchor=north west][inner sep=0.75pt] [font=\Large]   {$x_{1}$};
\end{tikzpicture}
}
\caption{Density $\rho_{0}(x)$ vs.\ $x$}
\label{fig:Rho0} 
%\end{subfigure}

\medskip
%\begin{subfigure}{\textwidth}
\centering
\tikzset{every picture/.style={line width=0.75pt}} 
\scalebox{0.9}{
\begin{tikzpicture}[x=0.75pt,y=0.75pt,yscale=-1,xscale=1]

\draw [line width=1.5]    (35,229.67) -- (458.33,230.66);
\draw [shift={(461.33,230.67)}, rotate = 180.13] [color={rgb, 255:red, 0; green, 0; blue, 0 }  ][line width=1.5]    (14.21,-4.28) .. controls (9.04,-1.82) and (4.3,-0.39) .. (0,0) .. controls (4.3,0.39) and (9.04,1.82) .. (14.21,4.28);
\draw [line width=1.5]    (49,223.67) -- (49,229.67);
\draw [line width=1.5]    (437,224.33) -- (437,229.67);
\draw    (394.5,202.83) -- (394.19,227.83);
\draw [shift={(394.17,229.83)}, rotate = 270.71] [color={rgb, 255:red, 0; green, 0; blue, 0 }  ][line width=0.75]    (10.93,-3.29) .. controls (6.95,-1.4) and (3.31,-0.3) .. (0,0) .. controls (3.31,0.3) and (6.95,1.4) .. (10.93,3.29);
\draw    (394.17,229.83) -- (394.48,204.83);
\draw [shift={(394.5,202.83)}, rotate = 90.71] [color={rgb, 255:red, 0; green, 0; blue, 0 }  ][line width=0.75]    (10.93,-3.29) .. controls (6.95,-1.4) and (3.31,-0.3) .. (0,0) .. controls (3.31,0.3) and (6.95,1.4) .. (10.93,3.29);

\draw  [dash pattern={on 4.5pt off 4.5pt}]  (85.67,202.33) -- (418.67,201.83);
\draw [line width=1.5]    (85.67,202.33) -- (85.67,229.67);
\draw [line width=1.5]    (85.67,202.33) -- (201.67,202.33);
\draw [line width=1.5]    (201.67,202.33) -- (202.17,222.33);
\draw [line width=1.5]    (272.33,202.33) -- (273.17,222.33);
\draw [line width=1.5]    (272.33,202.33) -- (359,202.33);
\draw [line width=1.5]    (359,202.33) -- (359,230.33);
\draw [line width=1.5]    (202.17,222.33) .. controls (243.17,229.83) and (234.67,221.83) .. (273.17,222.33);

\draw (461.33,230.67) node [anchor=north west][inner sep=0.75pt]  [font=\Large]  {$t$};
\draw (41.33,240) node [anchor=north west][inner sep=0.75pt] [font=\Large]   {$0$};
\draw (429.33,240) node [anchor=north west][inner sep=0.75pt] [font=\Large]   {$1$};
\draw (399.33,206) node [anchor=north west][inner sep=0.75pt] [font=\Large]   {$h$};
\draw (57.83,240) node [anchor=north west][inner sep=0.75pt]   {$toll( \beta _{0})$};
\draw (171.67,240) node [anchor=north west][inner sep=0.75pt]    {$toll( z_{y}{}_{_{1}})$};
\draw (248.17,240) node [anchor=north west][inner sep=0.75pt]     {$toll(y_{2})$};
\draw (331.67,240) node [anchor=north west][inner sep=0.75pt]   {$toll( \alpha _{0})$};
\end{tikzpicture}
}
\caption{Flux $\rho_{t}(x_0)v_{t}(x_0)$ at crossing.}
\label{fig:momentum}
%\end{subfigure}
    
\medskip
%\begin{subfigure}{\textwidth}
\centering
\tikzset{every picture/.style={line width=0.75pt}} 
\scalebox{0.8}{
\begin{tikzpicture}[x=0.75pt,y=0.75pt,yscale=-1,xscale=1]
\draw [line width=1.5]    (79,353.47) -- (596,353.47);
\draw [shift={(599,353.47)}, rotate = 180] [color={rgb, 255:red, 0; green, 0; blue, 0 }  ][line width=1.5]    (14.21,-4.28) .. controls (9.04,-1.82) and (4.3,-0.39) .. (0,0) .. controls (4.3,0.39) and (9.04,1.82) .. (14.21,4.28);
\draw [line width=1.5]    (322.93,110.2) -- (325.97,384.8);
\draw [shift={(326,387.8)}, rotate = 269.37] [color={rgb, 255:red, 0; green, 0; blue, 0 }  ][line width=1.5]    (14.21,-4.28) .. controls (9.04,-1.82) and (4.3,-0.39) .. (0,0) .. controls (4.3,0.39) and (9.04,1.82) .. (14.21,4.28);
\draw [line width=1.5]    (324,152.47) -- (332,152.6);
\draw [line width=1.5]    (79,133.47) -- (596,133.47);
\draw [shift={(599,133.47)}, rotate = 180] [color={rgb, 255:red, 0; green, 0; blue, 0 }  ][line width=1.5]    (14.21,-4.28) .. controls (9.04,-1.82) and (4.3,-0.39) .. (0,0) .. controls (4.3,0.39) and (9.04,1.82) .. (14.21,4.28);
\draw  [color={rgb, 255:red, 74; green, 144; blue, 226 }  ,draw opacity=1 ][fill={rgb, 255:red, 74; green, 144; blue, 226 }  ,fill opacity=1 ][line width=1.5]  (341.14,128.39) -- (342.09,192.66) -- (326.13,218.6) -- (325.19,154.33) -- cycle;
\draw  [color={rgb, 255:red, 208; green, 2; blue, 27 }  ,draw opacity=1 ][fill={rgb, 255:red, 208; green, 2; blue, 27 }  ,fill opacity=1 ][line width=1.5]  (342.9,236.55) -- (343.63,297.91) -- (327.28,324.35) -- (326.54,262.99) -- cycle;
\draw  [fill={rgb, 255:red, 74; green, 144; blue, 226 }  ,fill opacity=1 ][line width=1.5]  (245.93,61.53) .. controls (254.23,61.2) and (269.43,133.33) .. (276,133.8) .. controls (282.57,134.27) and (153.84,134.04) .. (152.77,133.67) .. controls (151.69,133.29) and (152.51,128.71) .. (152.51,128.38) .. controls (152.51,128.04) and (207.63,128.89) .. (218.39,124.39) .. controls (229.15,119.89) and (230.36,106.58) .. (231.64,99.14) .. controls (232.93,91.7) and (237.64,61.87) .. (245.93,61.53) -- cycle;
\draw  [fill={rgb, 255:red, 208; green, 2; blue, 27 }  ,fill opacity=1 ][line width=1.5]  (143.77,128) .. controls (150.87,128.44) and (151.63,128.26) .. (152.51,128.38) .. controls (153.39,128.5) and (152.87,134.11) .. (152.77,133.67) .. controls (152.67,133.22) and (99.27,133) .. (97.77,133.5) .. controls (96.27,134) and (106.1,82.86) .. (118.76,82.67) .. controls (131.43,82.48) and (136.67,127.56) .. (143.77,128) -- cycle;
\draw  [fill={rgb, 255:red, 74; green, 144; blue, 226 }  ,fill opacity=1 ][line width=1.5]  (508,332.67) .. controls (515.5,323.17) and (516.93,297.32) .. (532.9,297.05) .. controls (548.88,296.77) and (564.17,352.5) .. (563.8,353.13) .. controls (563.43,353.77) and (440.17,353) .. (439.8,353.13) .. controls (439.43,353.27) and (439.71,347.14) .. (439.67,347) .. controls (439.62,346.86) and (489.5,345.67) .. (494,343.67) .. controls (498.5,341.67) and (500.5,342.17) .. (508,332.67) -- cycle;
\draw  [fill={rgb, 255:red, 208; green, 2; blue, 27 }  ,fill opacity=1 ][line width=1.5]  (422.58,336.61) .. controls (430.08,342.61) and (440.05,347.81) .. (439.67,347) .. controls (439.29,346.19) and (439.8,354.13) .. (439.8,353.13) .. controls (439.8,352.13) and (367.9,353.09) .. (371.85,352.94) .. controls (375.8,352.8) and (378.82,311.46) .. (396.42,310.66) .. controls (414.02,309.86) and (415.08,330.61) .. (422.58,336.61) -- cycle;
\draw [line width=1.5]    (325,330.47) -- (331.42,320.82);
\draw [line width=1.5]    (277.73,128.2) -- (277.8,132.6);
\draw [line width=1.5]    (324,152.47) -- (328.22,146.42);
\draw [line width=1.5]    (211.73,128.2) -- (211.8,133.6);
\draw [line width=1.5]    (277.73,128.2) -- (277.8,133.6);
\draw [line width=1.5]    (96.73,127.2) -- (96.8,132.6);
\draw [line width=1.5]    (373.6,348.6) -- (373.67,354);
\draw [line width=1.5]    (492.13,347.4) -- (492.2,352.8);
\draw [line width=1.5]    (563.73,347.73) -- (563.8,353.13);
\draw [line width=1.5]    (276,133.8) -- (324,152.47) ;
\draw [shift={(307.27,145.96)}, rotate = 201.25] [color={rgb, 255:red, 0; green, 0; blue, 0 }  ][line width=1.5]    (14.21,-4.28) .. controls (9.04,-1.82) and (4.3,-0.39) .. (0,0) .. controls (4.3,0.39) and (9.04,1.82) .. (14.21,4.28);
\draw [line width=1.5]    (324,152.47) -- (563.8,353.13);
\draw [shift={(449.88,257.81)}, rotate = 219.92] [color={rgb, 255:red, 0; green, 0; blue, 0 }  ][line width=1.5]    (14.21,-4.28) .. controls (9.04,-1.82) and (4.3,-0.39) .. (0,0) .. controls (4.3,0.39) and (9.04,1.82) .. (14.21,4.28);
\draw  [color={rgb, 255:red, 74; green, 144; blue, 226 }  ,draw opacity=1 ][fill={rgb, 255:red, 74; green, 144; blue, 226 }  ,fill opacity=1 ] (331,210.59) .. controls (332.33,214.03) and (331.67,212.93) .. (330.33,229.26) .. controls (329,245.59) and (332,251.59) .. (330.67,253.93) .. controls (329.33,256.26) and (325.6,263.15) .. (326,260.59) .. controls (326.4,258.04) and (324.93,222.2) .. (326.13,218.6) .. controls (327.33,215) and (329.67,207.16) .. (331,210.59) -- cycle;
\draw [line width=1.5]    (324.47,220.27) -- (329,213.59);
\draw [line width=1.5]    (324.67,263.8) -- (329,256.59);
\draw [line width=1.5]    (211.8,133.6) -- (324.47,220.27);
\draw [shift={(274.32,181.69)}, rotate = 217.57] [color={rgb, 255:red, 0; green, 0; blue, 0 }  ][line width=1.5]    (14.21,-4.28) .. controls (9.04,-1.82) and (4.3,-0.39) .. (0,0) .. controls (4.3,0.39) and (9.04,1.82) .. (14.21,4.28);
\draw [line width=1.5]    (324.47,220.27) -- (492.2,352.8) ;
\draw [shift={(414.45,291.37)}, rotate = 218.31] [color={rgb, 255:red, 0; green, 0; blue, 0 }  ][line width=1.5]    (14.21,-4.28) .. controls (9.04,-1.82) and (4.3,-0.39) .. (0,0) .. controls (4.3,0.39) and (9.04,1.82) .. (14.21,4.28);
\draw [line width=1.5]    (152.77,133.67) -- (324.67,263.8) ;
\draw [shift={(244.94,203.44)}, rotate = 217.13] [color={rgb, 255:red, 0; green, 0; blue, 0 }  ][line width=1.5]    (14.21,-4.28) .. controls (9.04,-1.82) and (4.3,-0.39) .. (0,0) .. controls (4.3,0.39) and (9.04,1.82) .. (14.21,4.28);
\draw [line width=1.5]    (324.67,263.8) -- (441.62,354.19) ;
\draw [shift={(389.31,313.77)}, rotate = 217.7] [color={rgb, 255:red, 0; green, 0; blue, 0 }  ][line width=1.5]    (14.21,-4.28) .. controls (9.04,-1.82) and (4.3,-0.39) .. (0,0) .. controls (4.3,0.39) and (9.04,1.82) .. (14.21,4.28);
\draw [line width=1.5]    (97.77,133.5) -- (325,330.47) ;
\draw [shift={(217.28,237.09)}, rotate = 220.92] [color={rgb, 255:red, 0; green, 0; blue, 0 }  ][line width=1.5]    (14.21,-4.28) .. controls (9.04,-1.82) and (4.3,-0.39) .. (0,0) .. controls (4.3,0.39) and (9.04,1.82) .. (14.21,4.28);
\draw [line width=1.5]    (325,330.47) -- (373.67,354) ;
\draw [shift={(356.36,345.63)}, rotate = 205.81] [color={rgb, 255:red, 0; green, 0; blue, 0 }  ][line width=1.5]    (14.21,-4.28) .. controls (9.04,-1.82) and (4.3,-0.39) .. (0,0) .. controls (4.3,0.39) and (9.04,1.82) .. (14.21,4.28);

%%%%%%%%%%%%
\draw (269.63,58.56) node [anchor=north west][inner sep=0.75pt]  [font=\LARGE]  {$\rho_{0}$};
\draw (556.93,270.59) node [anchor=north west][inner sep=0.75pt]  [font=\LARGE]  {$\rho_{1}$};
\draw (596,104.47) node [anchor=north west][inner sep=0.75pt]  [font=\LARGE]  {$x$};
\draw (596.5,323.97) node [anchor=north west][inner sep=0.75pt]  [font=\LARGE]  {$x$};
\draw (338.36,375) node [anchor=north west][inner sep=0.75pt]  [font=\LARGE]  {$t$};
\draw (324.5,98.67) node [anchor=north west][inner sep=0.75pt]  [font=\LARGE]  {$x_{0}$};
\draw (140.79,140) node [anchor=north west][inner sep=0.75pt]    [font=\Large] {$y_{2}$};
\draw (195.88,140) node [anchor=north west][inner sep=0.75pt]    [font=\Large] {$z_{y_1}$};
\draw (274.5,217) node [anchor=north west][inner sep=0.75pt]    {$toll(z_{y_{1}})$};
\draw (278.77,263) node [anchor=north west][inner sep=0.75pt]    {$toll(y_{2})$};
\draw (407.75,364.89) node [anchor=north west][inner sep=0.75pt]    [font=\Large] {$\mathrm{Y}_{1}(y_2)$};
\draw (474.2,364.38) node [anchor=north west][inner sep=0.75pt]    [font=\Large] {$\mathrm{Y}_1(z_{y_{2}})$};
\end{tikzpicture}}
\caption{Illustration of the flow through the toll.
The middle segment $[y_2,z_{y_1}]$ transports through the toll unimpeded by the constraint towards the final destination, via the optimal transport map $T$, designed for unconstrained transport; each point in this interval maintains the same velocity before and after the toll. In contrast, the segments to the left and right, $[z_{y_2},y_2]$
and $[z_{y_1},y_1]$, respectively, are adjusted accordingly so as to saturate the constraint. The exact position of their respective end points (that may even be outside the support of $\rho_0$, as a matter of computational simplicity, in which case they correspond to zero density) are computed via the solution of an optimization problem and depend on the terminal distribution $\rho_1$ as well.
}
\label{fig:tollinterval}
%\end{subfigure}
\end{figure}

We explain the process in Figures \ref{fig:Rho0}-\ref{fig:tollinterval}
with an example. This example presents a situation  %of a distribution $\rho_0$
where the behavior of the corresponding optimal solution $\f^\star$ is characterized by two distinct intervals $[z_{y_i},y_i]$ $i=1,2$, where the constraint saturates.
Thus, for this example, we identify three intervals of interest, $[z_{y_2},y_2]$, $[y_2,z_{y_1}]$, and $[z_{y_1},y_1]$. In the first and the last, the constraint saturates, whereas in the middle interval it does not.
We proceed by working our way from right to left, always assuming that $\supp(\rho_0)$ is to the left of the toll, as in the figures.

In general, the process begins by first computing the optimal transport map $T$, without involving the constraint. Then, we identify $x_1$ as the rightmost point where the throughput hits the limit set at $x_0$. Naturally, if the optimal transport map satisfies the throughput constraint, then it is the optimal map and specifies $\f^*$ throughout. Assuming that $x_1$ is finite, then a search to the right of $x_1$, that we explain later on, identifies $y_1$ as the rightmost point where $\f$ needs to 
be adjusted so as to abide by the throughput
constraint while minimizing the transportation cost. In the example depicted in Fig.\ \ref{fig:Rho0}, $y_1$ is shown located to the right of $\beta_0$ 
($=$ the supremum of the support of $\rho_0$), though this is not always the case, and depends on the terminal distribution $\rho_1$ via the optimization problem that specifies $y_1$.
We choose to explain this case, where $y_1$ is to the right of $\beta_0$ so as to highlight that this is indeed possible.

Continuing on with our specific example,
for the interval $[z_{y_1},y_1]$, we have $\f^\star=\f_{y_1}$,  with $\f_y$ defined in equation \eqref{eq:vy} explained below, which ensures that $C_x^{\rm alt}({\f})=h$.  Then, on $[y_2,z_{y_1}]$ we have once again that the velocity is specified by the ``unconstrained'' optimal map $T$, i.e, that $\f^\star=T-\Id$, and so $C_x^{\rm alt}({\f})=C_x^{\rm alt}({T-\Id})$.
\black
Finally on $[z_{y_2},y_2]$, we have $\f^\star=\f_{y_2}$ as $C_x^{\rm alt}({\f})=h$. Note that in this specific example where $y_1\geq \beta_0$ and $z_{y_2}\leq \alpha_0$, we have \black $\forall x \in [z_{y_2},\alpha_0]$, $toll(x)=toll(\alpha_0)$ and $\forall x \in [\beta_0,y_1]$, $toll(x)=toll(\beta_0)$. 
% Note that the terminal distribution $\rho_1$ is not being
% part of
% our discussion here, but it is implicit in $\f^*$ which is computed via the optimization problem in \eqref{prob3}.

 We now detail how to build explicitly $\f^\star$ in the general case.  As noted,
if $T-\Id$ verifies the constraint throughout, which can now be explicitly stated as in \eqref{eq:c5}, then $\f^*=T-\Id$ is the optimal solution. Otherwise define $x_1=\sup \{x\in \supp(\rho_0)\ | \ C_x(T-\Id)> h\}$, and thereby we determine $y_1\in[x_1,x_0]$
(cf.\ Lemma \ref{lemma1})
such that
\begin{equation}\label{eq:y1}
y_1=\sup \{x \in \mathbb{R}\ | \ C_x^{\rm alt}({\f^\star})= h\}.
\end{equation} 

For any $y\in \mathbb{R}$ with $ x_1\leq y < x_0$, define the velocity $v_y(x)$ as the solution of the differential equation
\begin{equation} \label{eq:vy}
\left\{\begin{array}{ll}
  \partial_x\f_y(x)=\frac{\f_y(x)^2\rho_0(x)-h\f_y(x)}{h(x_0-x)} \quad \quad \text{ for } x\leq y.\\
    \f_y(y)=T(y)-y
\end{array}\right.
\end{equation}
Note that this equation is solved backwards, starting from a terminal condition at $y$.
This value for the velocity ensures that the transport will saturate the constraint to left of $y$ (i.e., $C_x^{\rm alt}({\f_y})= h$ will hold for $x\leq y$). The functional form of $v_y(x)$ will be used next to identify the first interval $[z_{y_1},y_1]$, where the velocity will depart form that of the unconstrained transport $T$, via solving a suitable optimization problem to determine $y_1$.
Since we know that the equality $C_x^{\rm alt}({\f^\star})= h$ will be true on a certain interval 
$[z_{y_1},y_1]$, on that interval we will have $\f^\star=v_{y_1}$.

Let
$w_y^{x_1}=\inf \{x\leq x_1 \mid \forall s\in (x,x_1), \f_{y}(s)\geq T(s)-s\}$ 
(well defined by Lemma \ref{lemma2})
and $z_y^{x_1}= \inf \{x\leq w_y^{x_1}\ | \ \forall s\in (x,w_y^{x_1}), \ \f_{y}(s)< T(s)-s\}$.
Then we have that
$\f^\star(x)=\f_{y_1}(x)$, $\forall x \in (z_{y_1}^{x_1},x_1)$ by Lemma \ref{lemma3} and Proposition \ref{prop2}. 

We now determine $y_1$ by solving a suitable optimization problem.
For $x\leq y < x_0$, define 
\begin{align*}
J_x(y) &= \int_0^1 \int_\mathbb{R}
\Big((T(s)-s)
\mathds{1}\{s\notin (z_y^{x},y)\}\\
&\hspace*{15pt} +((x_0-s)\f_{y}(s)+\frac{(T(s)-x_0)^2}{1-\frac{x_0-s}{\f_{y}(s)}})\mathds{1}\{s\in (z_y^{x},y)\}\Big)\rho_0(s)dsdt.
\end{align*} 
We have $J_x(y)=J(\f_y^{+})$ for the function $\f_y^{+}$ such that $\f_y^{+}=\f_y$ on $[z_y^x,y]$ and $\f_y^{+}=T-\Id$ on $\mathbb{R}\setminus  [z_y^x,y]$. Then the first step of the building process of $\f^\star$ is to find $y_1$ solution of 
\[
    y_1= \argmin \limits_{y\geq x_1} J_{x_1}(y).
\]
Such a $y_1$ is well defined as $J_{x_1}$ is continuous on $[x_1,x_0]$. 
Once $y_1$ has been determined, we define $x_2=\sup \{x<z_{y_1} \ | \ C_x(T-\Id)> h\}$. If $x_2$ is not defined then
\[
\f^\star(x) = \left\{
\begin{array}{ll}
    \f_{y_1}(x) & \text{if } x\in (z_1,y_1), \medskip\\
    T(x)-x & \text{if } x\notin (z_1,y_1), \medskip
\end{array}
\right.
\]
otherwise we start again the same process to determine $y_2$ as
\begin{align*}
    y_2= \argmin \limits_{y>x_2} J_{x_2}(y).
\end{align*}
If $y_2<z_{y_1}$, it suggests that there is an interval $[y_2,z_{y_1}]$ where the transport follows the unconstrained map $T$, and we continue in the same way.

However, it is possible that the condition $y_i\leq z_{y_{i-1}}$ fails at some point, for some $i\geq 2$. In that case, intervals where the velocity departs from being $T(x)-x$, will merge. 
For instance, if we obtain $y_i>z_{y_{i-1}}$ then as $(y,y^\prime)\mapsto J_{x_{i-1}}(y) + J_{x_{i}}(y^\prime)$ is convex on $\{(y,y^\prime)\ | \ y^\prime\leq z_y\}$, it means that $C_x^{\rm alt}({\f^\star})= h$, $\forall x \in (x_i,x_{i-1})$ and therefore we have to start the optimization again and determine $y_{i-1}$ as 
\begin{align*}
    y_{i-1}= \argmin \limits_{y>x_{i-1}} J_{x_i}(y).
\end{align*}

If we obtain a value $y_{i-1}>z_{y_{i-2}}$, we reset $x_{i-1}$ as being equal to $x_i$ and, once again, we have to redetermine 
\[
y_{i-2}= \argmin \limits_{y>x_{i-2}} J_{x_{i-1}}(y).
\]
Otherwise, i.e., if we obtain a value
$y_{i-1}\leq z_{y_{i-2}}$,
we reset $x_i$ as  $x_i=\sup \{x<z_{y_{i-1}} \ | \ C_x(T-\Id)> h\}$ for this updated value $y_{i-1}$. 
Once again, if $x_i$ is well defined we continue the process by finding \[
y_i= \argmin \limits_{y>x_i} J_{x_i}(y).
\]
We continue this iterative process until $\f^\star$ is defined on all of the support of $\rho_0$.
We finally remark that \black
$$
E:=\bigcup \limits_{i=1}^n (z_{y_i},y_i)=\{x \in \mathbb{R} \ | \ \exists \delta>0, \forall y\in(x-\delta,x)\cup (x,x+\delta), T(x)-x\neq \f^\star(x)\}.
$$
\blue
Note that we have that $n\in \mathbb{N}\cup \{+\infty\}$, so the process doesn't necessarily terminate. If one absolutely wants the process to terminate, they have to be careful to the oscillations of $C_x(T-\Id)$ around the value $h$. Indeed, if the process doesn't terminate, it implies that $x_i=\sup \{x<z_{y_{i-1}} \ | \ C_x(T-\Id)> h\}$ always exists $\forall i$, so the function $x\mapsto C_x(T-\Id)$ oscillates indefinitely around $h$ as $x$ is moving backward. Supposing that the densities $\rho_0$ and $\rho_1$ are Lipschitz, then $T$ has Lipschitz derivative so $x\mapsto C_x(T-\Id)$ is also Lipschitz. This implies that the oscillations around $h$ become smaller and smaller (in size) so it suffices to lower the value of $h$ of any $\epsilon>0$ to avoid the infinite oscillations.
\black

We summarize our conclusions on the shape of $\f^\star$ in the following statement.
\begin{theorem}\label{thm:4.5} The solution $\f^\star$ of Problem \ref{prob:problem3} satisfies:
\begin{itemize}
    \item[a)] For $x\in (z_{y_i},y_i)$, we have $\f^\star(x)=\f_{y_i}(x)$ (defined in \eqref{eq:vy}) and  $C_x^{\rm alt}({\f^\star})= h$.
    \item[b)] There exists $ w_i\in (z_{y_i},y_i)$ such that $\forall y \in[w_i,y_i)$, $\f^\star(y)\geq T(y)-y$ and $\forall y \in(z_{y_i},w_i]$, $\f^\star(y)\leq T(y)-y$.
    \item[c)] $\forall x \notin E$, $\f^\star(x)= T(x)-x$.
    \item[d)] The building process of $\f^\star$ consists in solving iteratively $y_i= \argmin \limits_{y>x_i} J_{x_i}(y)$ and if $y_{j}>z_{y_{j-1}}$, then starting again by setting $y_{j-1}= \argmin \limits_{y>x_{j-1}} J_{x_{j}}(y)$. \blue  The process may contain an infinite number of steps. \black 
\end{itemize}
\end{theorem}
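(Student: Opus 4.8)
The plan is to obtain a)--c) as structural properties satisfied by \emph{any} optimal solution, extracted from the propositions and lemmas already proved, and then to read off d) from the construction, using the uniqueness Theorem~\ref{thm:thm1} to certify that the explicitly built field is the minimizer. For a), I would fix an index $i$ and a point $x\in(z_{y_i},y_i)$; since by definition of these intervals $\f^\star\neq T-\Id$ on them, the contrapositive of Proposition~\ref{prop2} gives $C_x(\f^\star)=h$, equivalently $C_x^{\rm alt}(\f^\star)=h$ through \eqref{eq:forrho0}. On this interval $\f^\star$ is $C^1$ (recall $\f^\star$ is $C^1$ almost everywhere, as established above), and rewriting the saturated constraint as a first-order equation produces exactly \eqref{eq:vy}. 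Because $\f^\star$ has a continuous representative and coincides with $T-\Id$ just to the right of $y_i$ (which lies outside $E$), continuity forces the terminal value $\f^\star(y_i)=T(y_i)-y_i$; Cauchy--Lipschitz---the right-hand side of \eqref{eq:vy} being locally Lipschitz in $\f$, and $\f^\star$ bounded as in the derivation surrounding \eqref{eq:vy}---then yields $\f^\star=\f_{y_i}$ on all of $(z_{y_i},y_i)$.

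For b) I would combine Lemmas~\ref{lemma2} and~\ref{lemma3}. On a maximal saturating interval $[z_{y_i},y_i]$ one is in the hypotheses of Lemma~\ref{lemma3}, and since $\f^\star\not\equiv T-\Id$ there, the lemma supplies both a point where $\f^\star>T-\Id$ and one where $\f^\star<T-\Id$. Lemma~\ref{lemma2} gives $\f^\star\geq T-\Id$ on $(x_i,y_i)$, and the definition of $w_{y_i}^{x_i}$ extends this to $[w_i,y_i)$ with $w_i:=w_{y_i}^{x_i}$, while the definition of $z_{y_i}=z_{y_i}^{x_i}$ gives $\f^\star\leq T-\Id$ on $(z_{y_i},w_i]$. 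The step needing care is the uniqueness of the crossover: I would argue, via Lemma~\ref{lemma3} applied to subintervals together with the strict convexity of $J$, that a second change of sign would permit a strict cost decrease, contradicting optimality. Assertion c) is then nearly immediate from the displayed characterization of $E$: if $x\notin E$, there is a sequence $y_n\to x$ with $\f^\star(y_n)=T(y_n)-y_n$, and continuity of $\f^\star$ and of $T$ gives $\f^\star(x)=T(x)-x$.

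For d), the plan is to note that, by a)--c), every minimizer is completely encoded by the breakpoints $(y_i)$: on $E^c$ it equals $T-\Id$ (which minimizes the matching piece of the additive functional $J$), and on each saturating interval it is forced to be $\f_{y_i}$ by a). Hence the total cost reduces to $\sum_i J_{x_i}(y_i)$, with overlapping or merging intervals handled through the convexity of $(y,y')\mapsto J_{x_{i-1}}(y)+J_{x_i}(y')$ on $\{(y,y')\ |\ y'\leq z_y\}$; this convexity is precisely what legitimizes the reset rule $y_{i-1}=\argmin_{y>x_{i-1}}J_{x_i}(y)$. Minimizing this sum is exactly the iterative scheme in the statement, and Theorem~\ref{thm:thm1} guarantees it recovers the unique solution. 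I expect the real obstacle to be concentrated here: showing that the construction is well posed even when it does not terminate---which happens exactly when $x\mapsto C_x(T-\Id)$ oscillates across the level $h$ infinitely often, so that $n=+\infty$---that the merging rules are the correct response to $y_i>z_{y_{i-1}}$, and that the resulting (countably) piecewise field still attains the infimum; the tools for this are the additivity of $J$ in $x$, the convexity just mentioned, and the fact that $\partial E$ is at most countable, hence $\rho_0$-negligible.
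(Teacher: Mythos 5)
Your proposal is correct and follows essentially the same route as the paper: the paper offers no standalone proof of Theorem~\ref{thm:4.5} (it is explicitly a summary of Section~\ref{sec:sec3}), and your assembly—Proposition~\ref{prop2} plus the Cauchy--Lipschitz/ODE argument for a), Lemmas~\ref{lemma2} and~\ref{lemma3} for b), continuity and the characterization of $E$ for c), and the iterative $\argmin$ scheme with the convexity-based merging rule and the non-termination caveat for d)—is precisely how the preceding development is meant to be read. The points you flag as delicate (uniqueness of the crossover $w_i$, well-posedness of a non-terminating construction) are left at the same informal level in the paper itself.
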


\section{Numerical example}\label{sec:example} 
We provide an example to highlight the departure  of the optimal transport plan through a toll with a bound on the flux, from the ideal unconstrained transport $T$. The example we have selected is basic, with
uniform probability densities $\rho_0(x) = \mathds{1}\{x \in [0,1]\}$, $\rho_1(x) = \mathds{1}\{x \in [2,3]\}$ , and a toll at $x_0=3/2$ with a bound $h$ on the flux, with  $1<h\leq 2$. The stringent constraint on the flux, that necessitates varying velocities so as to redistribute the mass flow as it traverses the toll, is clearly seen in the succession of distributions $Y_{t\sharp \rho_0}$ displayed in Fig.~\ref{fig:melange}. Evidently, these readily contrast with the unconstrained transport that pushes forward $\rho_0$ with constant speed giving $\rho_t(x)=\rho_0(x-2t)$.

Specifically, with the flux-constraint in place, we obtain that the optimal transport is effected by
\[
Y_t(x)=\left\{\begin{array}{lll}
x +t\f(x) &\mbox{ for }&t\leq \toll(x)=\frac{3/2-x}{\f(x)},\\\\
3/2 +(t-\toll(x))g(x) &\mbox{ for }&t\geq \toll(x).
\end{array}\right.
\]
Then, the constraint \eqref{eq:c5} gives that $\f$ solves the ODE 
\[
\frac{\f(x)}{1+\frac{3/2-x}{\f(x)} \partial_x\f(x)} =h.
\]
It follows that $\f(x) = \frac{h(2x-3)}{2x-3+\alpha}$ for a certain value $\alpha \in \mathbb{R}$ \black. Using the fact that the optimal solution must be symmetric in time $(\f(x)=g(1-x))$ and that $g(x) = \frac{x+0.5}{1-\toll(x)}$, we finally obtain that $\f(x) = \frac{h(2x-3)}{2x-1-h}$.
Snapshots of the flow along the path from $\rho_0$ to $\rho_1$ are depicted in Figure \ref{fig:melange}.

\begin{figure}[htb]
%\begin{center}
  \includegraphics[width=1.1\textwidth]{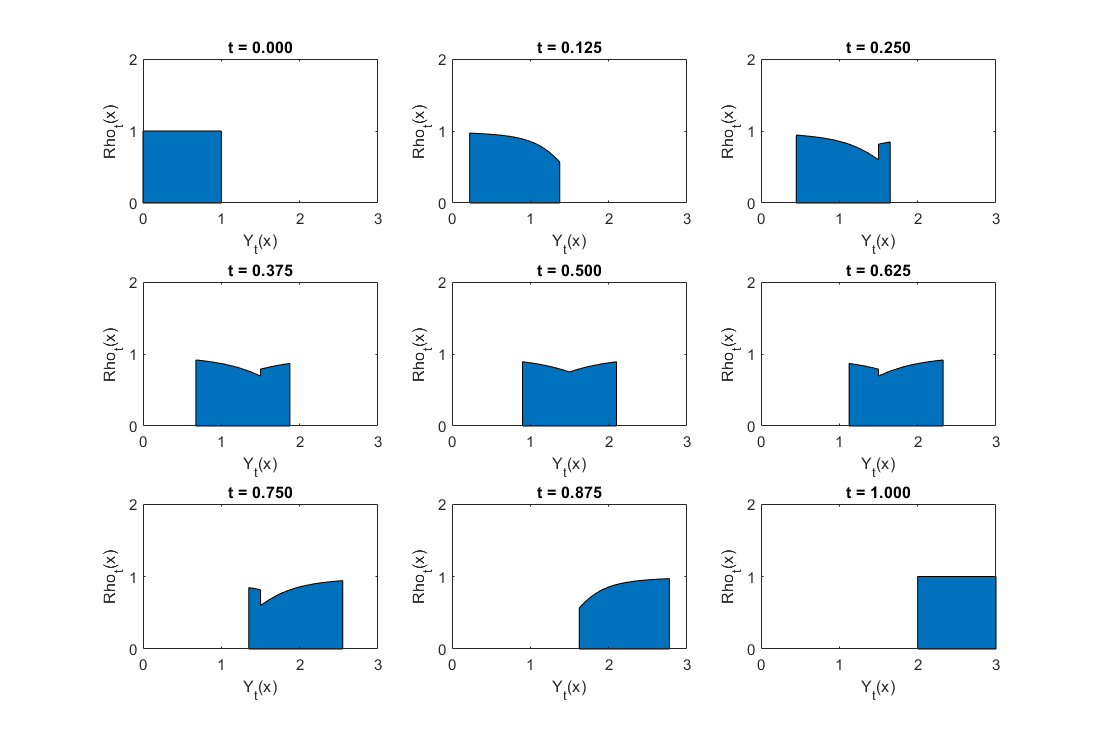}  
\caption{ Example of  transporting a uniform distribution through a constriction (with $h=1.5$) to a similar uniform terminal distribution. While the optimal unconstrained transport will preserve the shape of the marginals at each time $t$, the flux constraint necessitates an optimal velocity that changes with $x$, stretching the leading edge of the distribution as it approaches the toll. Note that the snapshots of the transported distributions $Y_{t\sharp \rho_0}$ ``squeeze'' while crossing the toll, and that the flow is symmetric with time.}
\label{fig:melange}
%\end{center}
\end{figure}

\section{Discussion and conclusion}\label{sec:discussion}
We have presented theory for the most basic optimal transport problem in $\mathbb R$, through a constriction where a throughput constraint is imposed. We modeled the formulation after the standard Monge-Kantorovich optimal transport with a quadratic cost. We have shown that an optimal transport exists and is unique under general assumptions. Under some suitable assumption on the densities to be transported to one another, we have shown explicitly how to construct the transport plan. Moreover, we have highlighted natural properties of the transport plan.

More generally, in the case where $\rho_0$ and $\rho_1$ are densities on $\mathbb{R}^d$
and that all the trajectories have to pass through a single point $x_0\in \mathbb{R}^d$, we can readily extend the result presented as follows.
For $\lambda_{\alpha S^{d-1}}$ the Lebesgue measure on the sphere of radius $\alpha$ and center $x_0$, 
define
$$
\nu^0(\alpha)= \int_{\alpha S^{d-1}}\rho_0(x)d\lambda_{\alpha S^{d-1}}(x)
$$
and $\nu^1$ the same way. Then the problem in $\mathbb{R}^d$ is equivalent to solving the problem in dimension 1 between the measure $\nu^0,\nu^1$ defined as $\nu^0(x)=\mathds{1}\{x<0\} \nu^0(-x)$ and $\nu^1(x)=\mathds{1}\{x>0\} \nu^1(x)$.

A significant departure from the current setting arises in the case of multiple tolls, or of a continuum of tolls, where the flux-rate is bounded on a curve, surface, etc.\ The case where a sequence of tolls, possibly even zero-dimensional (points), where mass has to flow through all in succession, is of particular interest in engineering applications. Indeed, in the modern information age, knowledge of obstructions ``down the road'' can undoubtedly be used to optimize transportation cost upstream.
On the other hand, the paradigm of multiple alternative tolls that one can choose to cross, is expected to have a more combinatorial flavor. Lastly, one could generalize the problem presented in this paper to transport of densities in dimension $d$, with a flux constraint on a measurable set with respect to the $p$-dimensional Hausdorff measure $\mathcal{H}^p$ (with $p\leq d$). For instance, an analogous flux constraint on a measurable set $A\subset \mathbb{R}^d$ with $0<\mathcal{H}^p(A)<\infty$ can be cast as: $\forall B\subset A$ measurable with $\mathcal{H}^p(B)>0$ and $t\in (0,1)$
\begin{equation*}
\forall \alpha_1,\alpha_2\in \mathbb{R}, \quad \frac{\mathcal{H}^p(A)}{\mathcal{H}^p(B)}  \int \mathds{1}_{\{\exists \tau \in (t+\alpha_1,t+\alpha_2)\ |\ Y_\tau(x)\in B\}} \rho_0(x)dx \leq h|\alpha_2-\alpha_1|.
\end{equation*}
The proof of existence and uniqueness of a solution should follow using similar arguments. However, to completely characterise the behavior of the solution as in the simpler case treated herein, is expected to be considerably more challenging; one would need a finer description of how the mass distributes while traversing the toll. \color{black}

% Lastly, the mathematics of coping with tolls of higher dimensionality (curves or surfaces) that mass has to flow through, may necessitate formulation within the frame of geometric measure theory. {\red Arthur: do you think this is too speculative? It just occured to me...}
% {\blue I don't think we need any complicated tools, as we can define the constraint on any measurable sets and use the Hausdorff measure to give them mass\\
% Maybe we should say that we can easily extend our proof of existence and uniqueness to those cases}

%{\color{red} Any thoughts/verbiage on how the problem is complicated when we have two toll points, in line with the mass having to pass through both, on $\mathbb R$, or independent, in $\mathbb R^2$ with the mass having to pass through either? It may be good to think this through a bit and suggest why it is more difficult.}

Transport problems with a throughput restriction are quite natural in a variety of scientific disciplines. 
Of course, transportation through tolls on highways represents perhaps the most rudimentary paradigm in an engineering setting.
Likewise, throughput through servers with a throughput bound is common in queuing systems. A continuum theory as envisioned herein, in higher dimension and with multiple serial tolls, may produce useful practical insights.
Finally, while fluid flow, passing through constrictions or porous media, though not directly abiding by the rigid setting of bounded throughput, could provide an idealized pertinent model in certain situations. 
Evidently, for an accurate model for fluid past constrictions, besides distinguishing between compressible and incompressible, throughput must be dictated by pressure, which in turn may be introduced in a suitable cost functional to be optimized for a further broadening of the general program.

%; variational formulation of Euler's equation appear suggestive in this regard \cite{}.

% \color{red} say that the conditions on the regularity of the density were necessary to built the optimal solution \color{black}

%\spacingset{1}
\section*{Acknowledgments}
The research was also supported in part by the National Science Foundation under grant 1807664 and the Air Force Office of Scientific Reserarch under grant FA9550-20-1-0029.

\bibliographystyle{siam}
\bibliography{ref}

\begin{thebibliography}{10}

\bibitem{ambrosio2005gradient}
{\sc L.~Ambrosio, N.~Gigli, and G.~Savar{\'e}}, {\em Gradient flows: in metric
  spaces and in the space of probability measures}, Springer Science \&
  Business Media, 2005.

\bibitem{bredies2022generalized}
{\sc K.~Bredies, M.~Carioni, S.~Fanzon, and F.~Romero}, {\em A generalized
  conditional gradient method for dynamic inverse problems with optimal
  transport regularization}, Foundations of Computational Mathematics,  (2022),
  pp.~1--66.

\bibitem{carlier2008optimal}
{\sc G.~Carlier, C.~Jimenez, and F.~Santambrogio}, {\em Optimal transportation
  with traffic congestion and wardrop equilibria}, SIAM Journal on Control and
  Optimization, 47 (2008), pp.~1330--1350.

\bibitem{chen2016relation}
{\sc Y.~Chen, T.~T. Georgiou, and M.~Pavon}, {\em On the relation between
  optimal transport and {S}chr{\"o}dinger bridges: A stochastic control
  viewpoint}, Journal of Optimization Theory and Applications, 169 (2016),
  pp.~671--691.

\bibitem{chen2021controlling}
\leavevmode\vrule height 2pt depth -1.6pt width 23pt, {\em Controlling
  {U}ncertainty}, IEEE Control Systems Magazine, 41 (2021), pp.~82--94.

\bibitem{chen2021stochastic}
\leavevmode\vrule height 2pt depth -1.6pt width 23pt, {\em Stochastic control
  liaisons: {R}ichard {S}inkhorn meets {G}aspard {M}onge on a {S}chr{\"o}dinger
  bridge}, SIAM Review, 63 (2021), pp.~249--313.

\bibitem{douglas1999applications}
{\sc M.~Cullen and R.~Douglas}, {\em Applications of the monge-ampere equation
  and monge transport problem to meteorology and oceanography}, Contemporary
  Mathematics, 226 (1999), p.~33.

\bibitem{ekren2018constrained}
{\sc I.~Ekren and H.~M. Soner}, {\em Constrained optimal transport}, Archive
  for Rational Mechanics and Analysis, 227 (2018), pp.~929--965.

\bibitem{fan2022complexity}
{\sc J.~Fan, I.~Haasler, J.~Karlsson, and Y.~Chen}, {\em On the complexity of
  the optimal transport problem with graph-structured cost}, in International
  Conference on Artificial Intelligence and Statistics, PMLR, 2022,
  pp.~9147--9165.

\bibitem{gangbo1996geometry}
{\sc W.~Gangbo and R.~J. McCann}, {\em The geometry of optimal transportation},
  Acta Mathematica, 177 (1996), pp.~113--161.

\bibitem{gladbach2022limits}
{\sc P.~Gladbach and E.~Kopfer}, {\em Limits of density-constrained optimal
  transport}, Calculus of Variations and Partial Differential Equations, 61
  (2022), pp.~1--31.

\bibitem{haker2004optimal}
{\sc S.~Haker, L.~Zhu, A.~Tannenbaum, and S.~Angenent}, {\em Optimal mass
  transport for registration and warping}, International Journal of computer
  vision, 60 (2004), pp.~225--240.

\bibitem{jordan1998variational}
{\sc R.~Jordan, D.~Kinderlehrer, and F.~Otto}, {\em The variational formulation
  of the fokker--planck equation}, SIAM journal on mathematical analysis, 29
  (1998), pp.~1--17.

\bibitem{korman2013insights}
{\sc J.~Korman and R.~J. McCann}, {\em Insights into capacity-constrained
  optimal transport}, Proceedings of the National Academy of Sciences, 110
  (2013), pp.~10064--10067.

\bibitem{leonard2007large}
{\sc C.~L{\'e}onard}, {\em A large deviation approach to optimal transport},
  arXiv preprint arXiv:0710.1461,  (2007).

\bibitem{otto2001geometry}
{\sc F.~Otto}, {\em The geometry of dissipative evolution equations: the porous
  medium equation}, Communications in Partial Differential Equations,  (2001).

\bibitem{peyre2019computational}
{\sc G.~Peyr{\'e}, M.~Cuturi, et~al.}, {\em Computational optimal transport:
  With applications to data science}, Foundations and Trends{\textregistered}
  in Machine Learning, 11 (2019), pp.~355--607.

\bibitem{rachev1998mass}
{\sc S.~T. Rachev and L.~R{\"u}schendorf}, {\em Mass Transportation Problems:
  Volume I: Theory}, vol.~1, Springer Science \& Business Media, 1998.

\bibitem{santambrogio2015optimal}
{\sc F.~Santambrogio}, {\em Optimal transport for applied mathematicians},
  Birk{\"a}user, NY, 55 (2015), p.~94.

\bibitem{villani2009optimal}
{\sc C.~Villani}, {\em Optimal transport: old and new}, vol.~338, Springer,
  2009.

\bibitem{villani2021topics}
\leavevmode\vrule height 2pt depth -1.6pt width 23pt, {\em Topics in optimal
  transportation}, vol.~58, American Mathematical Soc., 2021.

\end{thebibliography}

\end{document}